\newif\iftechreport%
\newcommand{\propdel}[1]{}
\newcommand{\buchi}{B\"{u}chi}
\newcommand{\naturals}{\mathbb{N}}
\newcommand{\angl}[1]{\left\langle#1\right\rangle}
\newcommand{\myrightarrow}[1]{\mathrel{\raisebox{-3pt}{$\xrightarrow{#1}$}}}
\newcommand{\csbot}{\mathbf{0}}
\newcommand{\cstop}{\mathbf{1}}
\newcommand{\abscsemiring}{\mathbb{E}}
\newcommand{\wcsemiring}{\mathbb{W}}
\newcommand{\composable}{\mathop{\ocircle}\nolimits}
\newcommand{\compose}{\mathop{\boxbox}\nolimits}
\newtheorem{theorem}{Theorem}
\newtheorem{lemma}{Lemma}
\newtheorem{proposition}{Proposition}
\newtheorem{corollary}{Corollary}
\theoremstyle{definition}
\newtheorem{definition}{Definition}
\pgfmathsetmacro{\gap}{3}
\pgfmathsetmacro{\angle}{8}
\pgfmathsetmacro{\nodesize}{10mm}
\pgfmathsetmacro{\scale}{0.8}
\newcommand{\obtain}{\mathsf{obtain}}
\newcommand{\release}{\mathsf{release}}
\newcommand{\transfer}{\mathsf{transfer}}
\newcommand{\north}{\mathsf{north}}
\newcommand{\south}{\mathsf{south}}
\newcommand{\move}{\mathsf{move}}
\newcommand{\snapshot}{\mathsf{snapshot}}
\newcommand{\halt}{\mathsf{halt}}
\newcommand{\s}{\mathsf{s}}
\newcommand{\discharge}{\mathsf{discharge}}
\newcommand{\charge}{\mathsf{charge}}
\newcommand{\pass}{\mathsf{pass}}
\newcommand{\e}{\mathsf{e}}
\newcommand{\w}{\mathsf{w}}
\newcommand{\cc}{\mathsf{c}}
\newcommand{\dash}{\;|\:}
\newcommand{\until}{\mathop{U}}
\newcommand{\nxt}{\mathop{X}\nolimits}
\newcommand{\eventually}{\mathop{\Diamond}}
\newcommand{\always}{\mathop{\Box}}
\newcommand{\true}{\raisebox{-0.08em}{\ensuremath{\top}}}
\newcommand{\captures}{\mathop{\Yright}}
\renewcommand{\implies}{\mathop{\rightarrow}}
\newcommand{\bigbowtie}{\text{\Large $\bowtie$}}
\title{A Component-oriented Framework for~Autonomous Agents}
\date{}
\author[1]{Tobias Kapp\'e\thanks{\texttt{tkappe@cs.ucl.ac.uk}}}
\author[2,3]{Farhad Arbab}
\author[4]{Carolyn Talcott}
\affil[1]{University College London, London, United Kingdom}
\affil[2]{Centrum Wiskunde \& Informatica, Amsterdam, The Netherlands}
\affil[3]{LIACS, Leiden University, Leiden, The Netherlands}
\affil[4]{SRI International, Menlo Park, USA}
\author{%
    Tobias Kapp\'e\inst{1}\textsuperscript{(\Letter)}
    \and
    Farhad Arbab\inst{2,3}
    \and
    Carolyn Talcott\inst{4}
}
\institute{%
    University College London, London, United Kingdom \\
    \texttt{tkappe@cs.ucl.ac.uk}
    \and
    Centrum Wiskunde \& Informatica, Amsterdam, The Netherlands
    \and
    LIACS, Leiden University, Leiden, The Netherlands
    \and
    SRI International, Menlo Park, USA
}
\begin{document}

\maketitle

\begin{abstract}
The design of a complex system warrants a compositional methodology, i.e., composing simple components to obtain a larger system that exhibits their collective behavior in a meaningful way. We propose an automaton-based paradigm for compositional design of such systems where an \emph{action} is accompanied by one or more \emph{preferences}. At run-time, these preferences provide a natural fallback mechanism for the component, while at design-time they can be used to reason about the behavior of the component in an uncertain physical world. Using structures that tell us how to compose preferences and actions, we can compose formal representations of individual components or agents to obtain a representation of the composed system. We extend Linear Temporal Logic with two unary connectives that reflect the compositional structure of the actions, and show how it can be used to diagnose undesired behavior by tracing the falsification of a specification back to one or more culpable components.
\end{abstract}

\section{Introduction}
Consider the design of a software package that steers a crop surveillance drone. Such a system (in its simplest form, a single drone agent) should survey a field and relay the locations of possible signs of disease to its owner. There are a number of concerns at play here, including but not limited to maintaining an acceptable altitude, keeping an eye on battery levels and avoiding birds of prey. In such a situation, it is best practice to isolate these separate concerns into different modules --- thus allowing for code reuse, and requiring the use of well-defined protocols in case coordination between modules is necessary. One would also like to verify that the designed system satisfies desired properties, such as ``even on a conservative energy budget, the drone can always reach the charging station''.

In the event that the designed system violates its verification requirements or exhibits behavior that does not conform to the specification, it is often useful to have an example of such behavior. For instance, if the surveillance drone fails to maintain its target altitude, an example of behavior where this happens could tell us that the drone attempted to reach the far side of the field and ran out of energy. Additionally, failure to verify an LTL-like formula typically comes with a counterexample --- indeed, a counterexample arises from the automata-theoretic verification approach quite naturally~\cite{vardi-1995}. Taking this idea of \emph{diagnostics} one step further in the context of a compositional design, it would also be useful to be able to identify the components responsible for allowing a behavior that deviates from the specification, whether this behavior comes from a run-time observation or a design-time counterexample to a desired property. The designer then knows which components should be adjusted (in our example, this may turn out to be the route planning component), or, at the very least, rule out components that are not directly responsible (such as the wildlife evasion component).

In this paper, we propose an automata-based paradigm based on Soft Constraint Automata~\cite{arbab-santini-2013,kappe-arbab-talcott-2016}, called Soft Component Automata (SCAs\footnote{Here, we use the abbreviation \emph{SCA} exclusively to refer to Soft \emph{Component} Automata.}). An SCA is a state-transition system where transitions are labeled with actions and preferences. Higher-preference transitions typically contribute more towards the goal of the component; if a component is in a state where it wants the system to move north, a transition with action $\north$ has a higher preference than a transition with action $\south$. At run-time, preferences provide a natural fallback mechanism for an agent: in ideal circumstances, the agent would perform only actions with the highest preferences, but if the most-preferred actions fail, the agent may be permitted to choose a transition of lower preference. At design-time, preferences can be used to reason about the behavior of the SCA in suboptimal conditions, by allowing all actions whose preference is bounded from below by a threshold. In particular, this is useful if the designer wants to determine the circumstances (i.e., threshold on preferences) where a property is no longer verified by the system.

Because the actions and preferences of an SCA reside in well-defined mathematical structures, we can define a composition operator on SCAs that takes into account the composition of actions as well as preferences. The result of composition of two SCAs is another SCA where actions and preferences reflect those of the operands. As we shall see, SCAs are amenable to verification against formulas in Linear Temporal Logic (LTL). More specifically, one can check whether the behavior of an SCA is contained in the behavior allowed by a formula of LTL\@. 

Soft Component Automata are a generalization of Constraint Automata~\cite{baier-sirjani-arbab-rutten-2006}. The latter can be used to coordinate interaction between components in a verifiable fashion~\cite{baier-blechmann-klein-kluppelholz-leister-2010}. Just like Constraint Automata, the framework we present blurs the line between \emph{computation} and \emph{coordination} --- both are captured by the same type of automata. Consequently, this approach allows us to reason about these concepts in a uniform fashion: coordination is not separate in the model, it is effected by components which are inherently part of the model.

We present two contributions in this paper. First, we propose an compositional automata-based design paradigm for autonomous agents that contains enough information about actions to make agents behave in a robust manner --- by which we mean that, in less-than-ideal circumstances, the agent has alternative actions available when its most desired action turns out to be impossible, which help it achieve some subset of goals or its original goals to a lesser degree. We also put forth a dialect of LTL that accounts for the compositional structure of actions and can be used to verify guarantees about the behavior of components, as well as their behavior in composition. Our second contribution is a method to trace errant behavior back to one or more components, exploiting the algebraic structure of preferences. This method can be used with both run-time and design-time failures: in the former case, the behavior arises from the action history of the automaton, in the latter case it is a counterexample obtained from verification.

\medskip
In Section~\ref{section:related-work}, we mention some work related to this paper; in Section~\ref{section:preliminaries} we discuss the necessary notation and mathematical structures. In Section~\ref{section:component-model}, we introduce Soft Component Automata, along with a toy model. We discuss the syntax and semantics of the LTL-like logic used to verify properties of SCAs in Section~\ref{section:linear-temporal-logic}. In Section~\ref{section:diagnostics}, we propose a method to extract which components bear direct responsibility for a failure. Our conclusions comprise Section~\ref{section:conclusion}, and some directions for further work appear in Section~\ref{section:further-work}. 
\iftechreport\else%
To save space, the proofs appear in the technical report accompanying this paper~\cite{kappe-arbab-talcott-2017-techreport}
\fi

\paragraph{Acknowledgements}
The authors would like to thank Vivek Nigam and the anonymous FACS-referees for their valuable feedback. This work was partially supported by  ONR grant N00014--15--1--2202.

\section{Related Work}%
\label{section:related-work}

The algebraic structure for preferences called the \emph{Constraint Semiring} was proposed by Bistarelli et al.~\cite{bistarelli-montanari-rossi-1995,bistarelli-2004}. Further exploration of the compositionality of such structures appears in~\cite{gadducci-holzl-monreale-wirsing-2013,holzl-meier-wirsing-2009,kappe-arbab-talcott-2016}. The structure we propose for modeling actions and their compositions is an algebraic reconsideration of \emph{static constructs}~\cite{huttel-larsen-1989}.

The automata formalism used in this paper generalizes \emph{Soft Constraint Automata}~\cite{baier-sirjani-arbab-rutten-2006,arbab-santini-2013}. The latter were originally proposed to give descriptions of Web Services~\cite{arbab-santini-2013}; in~\cite{kappe-arbab-talcott-2016}, they were used to model fault-tolerant, compositional autonomous agents. Using preference values to specify the behavior of autonomous agents is also explored from the perspective of rewriting logic in the \emph{Soft Agent Framework}~\cite{talcott-arbab-yadav-2015,talcott-nigam-arbab-kappe-2016}. Recent experiments with the Soft Agent Framework show that behavior based on soft constraints can indeed contribute robustness~\cite{mason-nigam-talcott-brito-2017}.

Sampath et al.~\cite{sampath-sengupta-lafortune-sinnamohideen-teneketzis-1996} discuss methods to detect unobservable errors based on a model of the system and a trace of observable events; others extended this approach~\cite{debouk-lafortune-teneketzis-2000,neidig-lunze-2005} to a multi-component setting. Casanova et al.~\cite{casanova-garlan-schmerl-abreu-2014} wrote about fault localisation in a system where some components are inobservable, based on which computations (tasks involving multiple components) fail. In these paradigms, one tries to find out where a \emph{runtime fault} occurs; in contrast, we try to find out which component is responsible for \emph{undesired behavior}, i.e., behavior that is allowed by the system but not desired by the specification.

A general framework for fault ascription in concurrent systems based on \emph{counterfactuals} is presented in~\cite{goessler-astefanoaei-2014,goessler-stefani-2015}. Formal definitions are given for failures in a given set of components to be necessary and/or sufficient cause of a system violating a given property. Components are specified by sets of sets of events (analogous to actions) representing possible correct behaviors. A parallel (asynchronous) composition operation is defined on components, but there is no notion of composition of events or explicit interaction between components. A system is given by a global behavior (a set of event sets) together with a set of system component specifications. The global behavior, which must be provided separately, includes component events, but may also have other events, and may violate component specifications (hence the faulty components).  In our approach, global behavior is  obtained by component composition. Undesired behavior may be local to a component or emerge as the result of interactions.

In LTL, a counterexample to a negative result arises naturally if one employs automata-based verification techniques~\cite{muller-saoudi-schupp-1988,vardi-1995}. In this paper, we further exploit counterexamples to gain information about the component or components involved in violating the specification. The application of LTL to Constraint Automata is inspired by an earlier use of LTL for Constraint Automata~\cite{baier-blechmann-klein-kluppelholz-leister-2010}.

Some material in this paper appeared in the first author's master's thesis~\cite{kappe-2016-thesis}.

\section{Preliminaries}%
\label{section:preliminaries}

If $\Sigma$ is a set, then $2^\Sigma$ denotes the set of subsets of $\Sigma$, i.e., the \emph{powerset} of $\Sigma$. We write $\Sigma^*$ for the set of \emph{finite words} over $\Sigma$, and if $\sigma \in \Sigma^*$ we write $|\sigma|$ for the \emph{length} of $\sigma$. We write $\sigma(n)$ for the $n$-th letter of $\sigma$ (starting at $0$). Furthermore, let $\Sigma^\omega$ denote the set of functions from $\mathbb{N}$ to $\Sigma$, also known as \emph{streams} over $\Sigma$~\cite{rutten-2005}. We define for $\sigma \in \Sigma^\omega$ that $|\sigma| = \omega$ (the smallest infinite ordinal). Concatenation of a stream to a finite word is defined as expected. We use the superscript $\omega$ to denote infinite repetition, writing $\sigma = \angl{0,1}^\omega$ for the parity function; we write $\Sigma^\pi$ for the set of \emph{eventually periodic} streams in $\Sigma^\omega$, i.e., $\sigma \in \Sigma^\omega$ such that there exist $\sigma_h, \sigma_t \in \Sigma^*$ with $\sigma = \sigma_h \cdot \sigma_t^\omega$. We write $\sigma^{(k)}$ with $k \in \naturals$ for the \emph{$k$-th derivative} of $\sigma$, which is given by $\sigma^{(k)}(n) = \sigma(k+n)$.

If $S$ is a set and $\odot: S \times S \to S$ a function, we refer to $\odot$ as an \emph{operator on $S$} and write $p \odot q$ instead of $\odot(p,q)$. We always use parentheses to disambiguate expressions if necessary. To model composition of actions, we need a slight generalization. If $R \subseteq S \times S$ is a relation and $\odot: R \to S$ is a function, we refer to $\odot$ as a \emph{partial operator on $S$ up to $R$}; we also use infix notation by writing $p \odot q$ instead of $\odot(p,q)$ whenever $pRq$. If $\odot: R \to S$ is a partial operator on $S$ up to $R$, we refer to $\odot$ as \emph{idempotent} if $p \odot p = p$ for all $p \in S$ such that $pRp$, and \emph{commutative} if $p \odot q = q \odot p$ whenever $p,q \in S$, $pRq$ and $qRp$. Lastly, $\odot$ is \emph{associative} if for all $p, q, r \in S$, $p R q$ and $(p \odot q) R r$ if and only if $q R r$ and $p R (q \odot r)$, either of which implies that $(p \odot q) \odot r = p \odot (q \odot r)$. When $R = S \times S$, we recover the canonical definitions of idempotency, commutativity and associativity.

A \emph{constraint semiring}, or \emph{c-semiring}, provides a structure on preference values that allows us to \emph{compare} the preferences of two actions to see if one is preferred over the other as well as \emph{compose} preference values of component actions to find out the preference of their composed action. A c-semiring~\cite{bistarelli-montanari-rossi-1995,bistarelli-2004} is a tuple $\angl{\abscsemiring, \bigoplus, \otimes, \csbot, \cstop}$ such that
\begin{inparaenum}[(1)]
    \item $\abscsemiring$ is a set, called the \emph{carrier}, with $\csbot, \cstop \in \abscsemiring$,
    \item $\bigoplus: 2^\abscsemiring \to \abscsemiring$ is a function such that for $e \in \abscsemiring$ we have that $\bigoplus \emptyset = \csbot$ and $\bigoplus \abscsemiring = \cstop$, as well as $\bigoplus \{ e \} = e$, and for $\mathcal{E} \subseteq 2^\abscsemiring$, also $\bigoplus \left\{ \bigoplus(E) : E \in \mathcal{E} \right\} = \bigoplus \bigcup \mathcal{E}$ (the \emph{flattening property}), and
    \item $\otimes: \abscsemiring \times \abscsemiring \to \abscsemiring$ is a commutative and associative operator, such that for $e \in \abscsemiring$ and $E \subseteq \abscsemiring$, it holds that $e \otimes \csbot = \csbot$ and $e \otimes \cstop = e$ as well as $e \otimes \bigoplus E = \bigoplus \{ e \otimes e' : e' \in E \}$.
\end{inparaenum}
We denote a c-semiring by its carrier; if we refer to $\abscsemiring$ as a c-semiring, associated symbols are denoted $\bigoplus_\abscsemiring, \csbot_\abscsemiring$, et cetera. We drop the subscript when only one c-semiring is in context.

The operator $\bigoplus$ of a c-semiring $\abscsemiring$ induces an idempotent, commutative and associative binary operator $\oplus: \abscsemiring \times \abscsemiring \to \abscsemiring$ by defining $e \oplus e' = \bigoplus (\{ e, e' \})$
The relation $\leq_{\abscsemiring}\ \subseteq \abscsemiring \times \abscsemiring$ is such that $e \leq_{\abscsemiring} e'$ if and only if $e \oplus e' = e'$; $\leq_{\abscsemiring}$ is a partial order on $\abscsemiring$, with $\csbot$ and $\cstop$ the minimal and maximal elements~\cite{bistarelli-2004}. All c-semirings are complete lattices, with $\bigoplus$ filling the role of the least upper bound operator~\cite{bistarelli-2004}. Furthermore, $\otimes$ is \emph{intensive}, meaning that for any $e, e' \in \abscsemiring$, we have $e \otimes e' \leq e$~\cite{bistarelli-2004}. Lastly, when $\otimes$ is idempotent, $\otimes$ coincides with the greatest lower bound~\cite{bistarelli-2004}. 

Models of a c-semiring include $\wcsemiring = \angl{\mathbb{R}_{\geq 0} \cup \{ \infty \}, \inf, \hat{+}, \infty, 0}$ (the \emph{weighted semiring}), where $\inf$ is the infimum and $\hat{+}$ is arithmetic addition generalized to $\mathbb{R}_{\geq 0} \cup \{ \infty \}$. Here, $\leq_{\wcsemiring}$ coincides with the obvious definition of the order $\geq$ on $\mathbb{R}_{\geq 0} \cup \{ \infty \}$. Composition operators for c-semirings exist, such as product composition~\cite{bistarelli-montanari-rossi-1997} and (partial) lexicographic composition~\cite{gadducci-holzl-monreale-wirsing-2013}. We refer to~\cite{kappe-arbab-talcott-2016} for a self-contained discussion of these composition techniques.

\section{Component Model}%
\label{section:component-model}

We now discuss our component model for the construction of autonomous agents. 

\subsection{Component Action Systems}
Observable behavior of agents is the result of the actions put forth by their individual components; we thus need a way to talk about how actions compose. For example, in our crop surveillance drone, the following may occur:
\begin{itemize}
    \item The component responsible for taking pictures wants to take a snapshot, while the routing component wants to move north. Assuming the camera is capable of taking pictures while moving, these actions may compose into the action ``take a snapshot while moving north''. In this case, actions compose \emph{concurrently}, and we say that the latter action \emph{captures} the former two. 
    \item The drone has a single antenna that can be used for GPS and communications, but not both at the same time. The component responsible for relaying pictures has finished its transmission and wants to release its lock on the antenna, while the navigation component wants to get a fix on the location and requests use of the antenna. In this case, the actions ``release privilege'' and ``obtain privilege'' compose \emph{logically}, into a ``transfer privilege'' action.
    \item The routing component wants to move north, while the wildlife avoidance component notices a hawk approaching from that same direction, and thus wants to move south. In this case, the intentions of the two components are contradictory; these component actions are \emph{incomposable}, and some resolution mechanism (e.g., priority) will have to decide which action takes precedence.
\end{itemize}
All of these possibilities are captured in the definition below.
\begin{definition}
A \emph{Component Action System (CAS)} is a tuple $\angl{\Sigma, \composable, \compose}$, such that $\Sigma$ is a finite set of \emph{actions}, $\composable \subseteq \Sigma \times \Sigma$ is a reflexive and symmetric relation and $\compose: \composable \to \Sigma$ is an idempotent, commutative and associative operator on $\Sigma$ up to $\composable$ (i.e., $\compose$ is an operator defined only on elements of $\Sigma$ related by $\composable$). We call $\composable$ the \emph{composability relation}, and $\compose$ the \emph{composition operator}.
\end{definition}
Every CAS $\angl{\Sigma, \composable, \compose}$ induces a relation $\sqsubseteq$ on $\Sigma$, where for $a, b \in \Sigma$, $a \sqsubseteq b$ if and only if there exists a $c \in \Sigma$ such that $a$ and $c$ are composable ($a \composable c$) and they compose into $b$ ($a \compose c = b$). One can easily verify that $\sqsubseteq$ is a preorder; accordingly, we call $\sqsubseteq$ the \emph{capture preorder} of the CAS\@.

As with c-semirings, we may refer to a set $\Sigma$ as a CAS\@. When we do, its composability relation, composition operator and preorder are denoted by $\composable_\Sigma$, $\compose_\Sigma$ and $\sqsubseteq_\Sigma$. We drop the subscript when there is only one CAS in context.

We model incomposability of actions by omitting them from the composability relation; i.e., if $\south$ is an action that compels the agent to move south, while $\north$ drives the agent north, we set $\south \centernot\composable \north$. Note that $\composable$ is not necessarily transitive. This makes sense in the scenarios above, where $\snapshot$ is composable with $\south$ as well as $\north$, but $\north$ is incomposable with $\south$. Moreover, incomposability carries over to compositions: if $\south \composable \snapshot$ and $\south \centernot\composable \north$, also $(\south \compose \snapshot) \centernot\composable \north$. This is formalized in the following lemma.
\begin{lemma}
Let $\angl{\Sigma, \composable, \compose}$ be a CAS and let $a, b, c \in \Sigma$. If $a \composable b$ but $a \centernot\composable c$, then $(a \compose b) \centernot\composable c$. Moreover, if $a \centernot\composable c$ and $a \sqsubseteq b$, then $b \centernot\composable c$.
\end{lemma}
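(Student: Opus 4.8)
The plan is to prove the two assertions in turn, reducing the second to the first. For the first assertion I would argue by contraposition: keeping the hypothesis $a \composable b$ as a standing assumption, it suffices to show that $(a \compose b) \composable c$ entails $a \composable c$. The whole argument rests on the biconditional form of associativity for the partial operator $\compose$, together with the commutativity of $\compose$ and the symmetry of $\composable$; notably, neither idempotency nor reflexivity is needed.

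The first step applies associativity to the triple $(a, b, c)$. Since $a \composable b$ and $(a \compose b) \composable c$ both hold, the left-hand side of the associativity biconditional is satisfied, so its right-hand side yields two new facts: $b \composable c$ and $a \composable (b \compose c)$. On its own this merely redistributes composability around the full composite $a \compose b \compose c$, and it is tempting to keep reshuffling indefinitely without ever isolating $a$ and $c$. Resisting that temptation and choosing the right second application is the main obstacle of the proof.

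The key second step is to invoke associativity again, this time for the reordered triple $(a, c, b)$, and to read the biconditional in the direction opposite to the one used above. Using commutativity of $\compose$ to rewrite $c \compose b$ as $b \compose c$, and symmetry of $\composable$ to turn $b \composable c$ into $c \composable b$, the facts harvested in the first step establish exactly the right-hand side of the biconditional for $(a, c, b)$, namely $c \composable b$ and $a \composable (c \compose b)$. The biconditional then forces its left-hand side, which contains precisely the desired conclusion $a \composable c$. This finishes the contrapositive and hence the first assertion.

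For the second assertion I would simply unfold the capture preorder: $a \sqsubseteq b$ means there is some $d \in \Sigma$ with $a \composable d$ and $a \compose d = b$. Applying the first assertion with $d$ in the role of $b$, together with the hypothesis $a \centernot\composable c$, immediately gives $(a \compose d) \centernot\composable c$, i.e.\ $b \centernot\composable c$. The only point requiring care is that the witness $d$ furnished by the definition of $\sqsubseteq$ is genuinely composable with $a$, so that the first assertion applies verbatim.
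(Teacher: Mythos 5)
Your proof is correct, and its overall strategy---contraposition via associativity for the first claim, then unfolding $\sqsubseteq$ to reduce the second claim to the first---is the same as the paper's. The interesting difference is in how associativity is deployed for the first claim. The paper applies it once to the triple $(a,b,c)$, obtains $b \composable c$ and $a \composable (b \compose c)$, and then declares a contradiction with ``the premise that $b \centernot\composable c$''; but the stated premise is $a \centernot\composable c$, so as written the paper's argument really establishes the variant with the roles of $a$ and $b$ swapped (equivalent to the stated claim only via $a \compose b = b \compose a$ and symmetry of $\composable$, which is left implicit). Your second application of associativity, to the reordered triple $(a,c,b)$ read right-to-left, is exactly what is needed to convert those two facts into $a \composable c$ and hit the actual premise, so your version is, if anything, more complete than the paper's. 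A slightly shorter route to the same end: apply associativity once to the triple $(b,a,c)$, whose left-hand side $b \composable a$ and $(b \compose a) \composable c$ holds by symmetry and commutativity, and whose right-hand side then yields $a \composable c$ directly. Your treatment of the second claim coincides with the paper's.
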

\begin{proof}
For the first claim, suppose that $(a \compose b) \composable c$. Then, since $\compose$ is associative up to $\composable$, it follows that $b \composable c$ and $a \composable (b \compose c)$, which contradicts the premise that $b \centernot\composable c$. We thus conclude that $(a \compose b) \centernot\composable c$. 

For the second claim, suppose that $a \centernot\composable c$ and $a \sqsubseteq b$. Then there exists a $d \in \Sigma$ such that $a \composable d$ and $a \compose d = b$. By the above, $b = (a \compose d) \centernot\composable c$.
\end{proof}

The composition operator facilitates concurrent as well as logical composition. Given actions $\obtain$, $\release$ and $\transfer$, with their interpretation as in the second scenario, we can encode that $\obtain$ and $\release$ are composable by stipulating that $\obtain \composable \release$, and say that their (logical) composition involves an exchange of privileges by choosing $\obtain \compose \release = \transfer$. Furthermore, the capture preorder describes our intuition of capturing: if $\snapshot$ and $\move$ are the actions of the first scenario, with $\snapshot \composable \north$, then $\snapshot, \north \sqsubseteq \snapshot \compose \north$.

Port Automata~\cite{koehler-clarke-2009} contain a model of a CAS\@. Here, actions are sets of symbols called \emph{ports}, i.e., elements of $2^P$ for some finite set $P$. Actions $\alpha, \beta \in 2^P$ are compatible when they agree on a fixed set $\gamma \subseteq P$, i.e., if $\alpha \cap \gamma = \beta \cap \gamma$, and their composition is $\alpha \cup \beta$. Similarly, we also find an instance of a CAS in \emph{(Soft) Constraint Automata}~\cite{baier-sirjani-arbab-rutten-2006,arbab-santini-2013}; see~\cite{kappe-2016-thesis} for a full discussion of this correspondence.

\subsection{Soft Component Automata}
Having introduced the structure we impose on actions, we are now ready to discuss the automaton formalism that specifies the sequences of actions that are allowed, along with the preferences attached to such actions.

\begin{definition}
A \emph{Soft Component Automaton (SCA)} is a tuple $\angl{Q, \mskip-1mu\Sigma, \abscsemiring, \rightarrow, q^0, t}$ where $Q$ is a finite set of \emph{states}, with $q^0 \in Q$ the \emph{initial state}, $\Sigma$ is a CAS and $\abscsemiring$ is a c-semiring with $t \in \abscsemiring$, and $\rightarrow\ \subseteq Q \times \Sigma \times \abscsemiring \times Q$ is a finite relation called the \emph{transition relation}. We write $q \myrightarrow{a,\, e} q'$ when $\angl{q, a, e, q'} \in\ \rightarrow$.
\end{definition}
An SCA models the actions available in each state of the component, how much these actions contribute towards the goal and the way actions transform the state. The threshold value restricts the available actions to those with a preference bounded from below by the threshold, either at run-time, or at design-time when one wants to reason about behaviors satisfying some minimum preference.

We stress here that the threshold value is purposefully defined as part of an SCA, rather than as a parameter to the semantics in Section~\ref{section:behavioral-semantics}. This allows us to speak of the preferences of an individual component, rather than a threshold imposed on the whole system; instead, the threshold of the system arises from the thresholds of the components, which is especially useful in Section~\ref{section:diagnostics}.


We depict SCAs in a fashion similar to the graphical representation of finite state automata: as a labeled graph, where vertices represent states and the edges transitions, labeled with elements of the CAS and c-semiring. The initial state is indicated by an arrow without origin. The CAS, c-semiring and threshold value will always be made clear where they are germane to the discussion.

An example of an SCA is $A_\e$, drawn in Figure~\ref{figure:sca-energy}; its CAS contains the incomposable actions $\charge$, $\discharge_1$ and $\discharge_2$, and its c-semiring is the weighted semiring $\wcsemiring$. This particular SCA can model the component of the crop surveillance drone responsible for keeping track of the amount of energy remaining in the system; in state $q_n$ (for $n \in \{0,1,\dots,4\}$), the drone has $n$ units of energy left, meaning that in states $q_1$ to $q_4$, the component can spend one unit of energy through $\discharge_1$, and in states $q_2$ to $q_4$, the drone can consume two units of energy through $\discharge_2$. In states $q_0$ to $q_3$, the drone can try to recharge through $\charge$.%
\footnote{This is a rather simplistic description of energy management. We remark that a more detailed description is possible by extending SCAs with \emph{memory cells}~\cite{jongmans-kappe-arbab-2017} and using a memory cell to store the energy level. In such a setup, a state would represent a \emph{range} of energy values that determines the components disposition regarding resources.}
Recall that, in $\wcsemiring$, higher values reflect a lower preference (a higher \emph{weight}); thus, $\charge$ is preferred over $\discharge_1$.
\begin{figure}[ht!]
    \centering
    \begin{tikzpicture}[scale=\scale, transform shape]
        \begin{scope}[every node/.style={draw,circle},minimum size=\nodesize]
            \node (e0) at (0,0) {$q_0$};
            \node (e1) at (3,0) {$q_1$};
            \node (e2) at (6,0) {$q_2$};
            \node (e3) at (9,0) {$q_3$};
            \node (e4) at (12,0) {$q_4$};
        \end{scope}

        \node[right=5mm of e4] (s) {};
        \path[->] (s) edge (e4);

        \begin{scope}[%
            every path/.style={->,looseness=0.5},%
            every node/.style={align=left}%
        ]
            \path (e0.\angle) edge[bend left] node[above] {$\charge, 0$} (e1.180-\angle);
            \path (e1.180+\angle) edge[bend left] node[below] {$\discharge_1, 2$} (e0.-\angle);
            \path (e1.\angle) edge[bend left] node[above] {$\charge, 0$} (e2.180-\angle);
            \path (e2.180+\angle) edge[bend left] node[below] {$\discharge_1, 2$} (e1.-\angle);
            \path (e2.\angle) edge[bend left] node[above] {$\charge, 0$} (e3.180-\angle);
            \path (e3.180+\angle) edge[bend left] node[below] {$\discharge_1, 2$} (e2.-\angle);
            \path (e3.\angle) edge[bend left] node[above] {$\charge, 0$} (e4.180-\angle);
            \path (e4.180+\angle) edge[bend left] node[below] {$\discharge_1, 2$} (e3.-\angle);
            \path (e2.south) edge[bend left,looseness=0.7] node[below] {$\discharge_2, 5$} (e0.south);
            \path (e3.south) edge[bend left,looseness=0.7] node[below] {$\discharge_2, 5$} (e1.south);
            \path (e4.south) edge[bend left,looseness=0.7] node[below] {$\discharge_2, 5$} (e2.south);
        \end{scope}
    \end{tikzpicture}
    \caption{A component modeling energy management, $A_\e$.}\label{figure:sca-energy}
\end{figure}
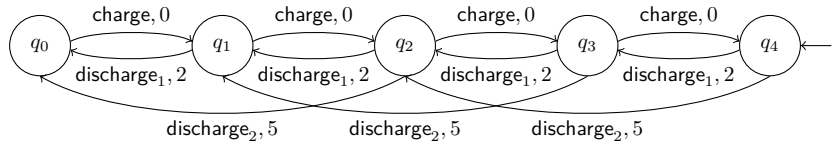

Here, $A_\e$ is meant to describe the possible behavior of the energy management component only. Availability of the actions within the \emph{total model} of the drone (i.e., the composition of all components) is subject to how actions compose with those of other components; for example, the availability of $\charge$ may depend on the state of the component modelling position. Similarly, preferences attached to actions concern energy management only. In states $q_0$ to $q_3$, the component prefers to top up its energy level through $\charge$, but the preferences of this component under composition with some other component may cause the composed preferences of actions composed with $\charge$ to be different. For instance, the total model may prefer executing an action that captures $\discharge_2$ over one that captures $\charge$ when the former entails movement and the latter does not, especially when survival necessitates movement. 

Nevertheless, the preferences of $A_\e$ affect the total behavior. For instance, the weight of spending one unit of energy (through $\discharge_1$) is lower than the weight of spending two units (through $\discharge_2$). This means that the energy component prefers to spend a small amount of energy before re-evaluating over spending more units of energy in one step. This reflects a level of care: by preferring small steps, the component hopes to avoid situations where too little energy is left to avoid disaster.

\subsection{Composition}

Composition of two SCAs arises naturally, as follows.
\begin{definition}
Let $A_i = \angl{Q_i, \Sigma, \abscsemiring, \rightarrow_i, q^0_i, t_i}$ be an SCA for $i \in \{0, 1\}$. The \emph{(parallel) composition} of $A_0$ and $A_1$ is the SCA $\angl{Q, \Sigma, \abscsemiring, \rightarrow, q^0, t_0 \otimes t_1}$, denoted $A_0 \bowtie A_1$, where $Q = Q_0 \times Q_1$, $q^0 = \angl{q^0_0, q^0_1}$, $\otimes$ is the composition operator of $\abscsemiring$, and $\rightarrow$ is the smallest relation satisfying
\[
\inferrule{%
    q_0 \myrightarrow{a_0,\ e_0}_0 q_0' \\
    q_1 \myrightarrow{a_1,\ e_1}_1 q_1' \\
    a_0 \composable a_1
}{%
    \angl{q_0, q_1} \myrightarrow{a_0 \compose a_1,\ e_0 \otimes e_1} \angl{q_0', q_1'}
}
\]
\end{definition}
In a sense, composition is a generalized product of automata, where composition of actions is mediated by the CAS\@: transitions with composable actions manifest in the composed automaton, as transitions with composed action and preference.

Composition is defined for SCAs that share CAS and c-semiring. Absent a common CAS, we do not know which actions compose, and what their compositions are. However, composition of SCAs with different c-semirings does make sense when the components model different concerns (e.g., for our crop surveillance drone, ``minimize energy consumed'' and ``maximize covering of snapshots''), both contributing towards the overall goal. Earlier work on Soft Constraint Automata~\cite{kappe-arbab-talcott-2016} explored this possibility. The additional composition operators proposed there can easily be applied to Soft Component Automata.


A state $q$ of a component may become unreachable after composition, in the sense that no state composed of $q$ is reachable from the composed initial state. For example, in the total model of our drone, it may occur that any state representing the drone at the far side of the field is unreachable, because the energy management component prevents some transition for lack of energy.

To discuss an example of SCA composition, we introduce the SCA $A_\s$ in Figure~\ref{figure:sca-snapshot}, which models the concern of the crop surveillance drone that it should take a snapshot of every location before moving to the next. The CAS of $A_\s$ includes the pairwise incomposable actions $\pass$, $\move$ and $\snapshot$, and its c-semiring is the weighted c-semiring $\wcsemiring$. We leave the threshold value $t_\s$ undefined for now. The purpose of $A_\s$ is reflected in its states: $q_Y$ (respectively $q_N$) represents that a snapshot of the current location was (respectively was not) taken since moving there. If the drone moves to a new location, the component moves to $q_N$, while $q_Y$ is reached by taking a snapshot. If the drone has not yet taken a snapshot, it prefers to do so over moving to the next spot (missing the opportunity).\footnote{A more detailed description of such a component could count the number of times the drone has moved without taking a snapshot first, and assign the preference of doing so again accordingly.}

\begin{figure}
    \centering
    \begin{tikzpicture}[scale=\scale, transform shape]
        \begin{scope}[every node/.style={draw,circle},minimum size=\nodesize]
            \node (sy) at (0,0) {$q_Y$};
            \node (sn) at (3,0) {$q_N$};
        \end{scope}

        \node[above right=5mm of sn] (s) {};
        \path[->] (s) edge (sn);

        \begin{scope}[%
            every path/.style={->,looseness=0.5},%
            every node/.style={align=left}%
        ]
            \path (sy.\angle) edge[bend left] node[above] {$\move, 0$} (sn.180-\angle);
            \path (sn.180+\angle) edge[bend left] node[below] {$\snapshot, 0$} (sy.-\angle);
            \path (sn) edge[loop right] node[right, align=left] {$\move, 2$ \\ $\pass, 1$} (sn);
            \path (sy) edge[loop left] node[left, align=right] {$\pass, 1$} (sy);
        \end{scope}
    \end{tikzpicture}
    \caption{A component modeling the desire to take a snapshot at every location, $A_\s$.}\label{figure:sca-snapshot}
\end{figure}
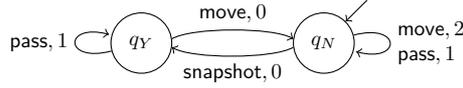

We grow the CAS of $A_\e$ and $A_\s$ to include the actions $\move$, $\move_2$, $\snapshot$ and $\snapshot_1$ (here, the action $\alpha_i$ is interpreted as ``execute action $\alpha$ and account for $i$ units of energy spent''), and $\composable$ is the smallest reflexive, commutative and transitive relation such that the following hold: $\move \composable \discharge_2$ (moving costs two units of energy), $\snapshot \composable \discharge_1$ (taking a snapshot costs one unit of energy) and $\pass \composable \charge$ (the snapshot state is unaffected by charging). We also choose $\move \compose \discharge_2 = \move_2$, $\snapshot \compose \discharge_1 = \snapshot_1$ and $\pass \compose \charge = \charge$. The composition of $A_\e$ and $A_\e$ is depicted in Figure~\ref{figure:sca-energy-snapshot}. 

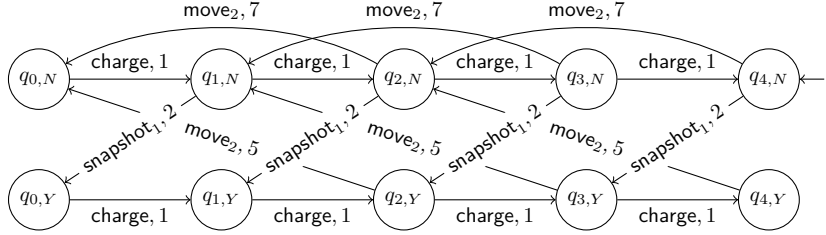
\begin{figure}
    \centering
    \begin{tikzpicture}[scale=\scale, transform shape]
        \begin{scope}[every node/.style={draw,circle},minimum size=\nodesize]
            \node (e0n) at (0,2) {$q_{0,N}$};
            \node (e1n) at (3,2) {$q_{1,N}$};
            \node (e2n) at (6,2) {$q_{2,N}$};
            \node (e3n) at (9,2) {$q_{3,N}$};
            \node (e4n) at (12,2) {$q_{4,N}$};
            \node (e0y) at (0,0) {$q_{0,Y}$};
            \node (e1y) at (3,0) {$q_{1,Y}$};
            \node (e2y) at (6,0) {$q_{2,Y}$};
            \node (e3y) at (9,0) {$q_{3,Y}$};
            \node (e4y) at (12,0) {$q_{4,Y}$};
        \end{scope}

        \node[right=4mm of e4n] (s) {};
        \path[->] (s) edge (e4n);

        \begin{scope}[%
            every path/.style={->},%
            every node/.style={align=left}%
        ]
            \path (e0n) edge node[above] {$\charge, 1$} (e1n);
            \path (e1n) edge node[above] {$\charge, 1$} (e2n);
            \path (e2n) edge node[above] {$\charge, 1$} (e3n);
            \path (e3n) edge node[above] {$\charge, 1$} (e4n);
            \path (e0y) edge node[below] {$\charge, 1$} (e1y);
            \path (e1y) edge node[below] {$\charge, 1$} (e2y);
            \path (e2y) edge node[below] {$\charge, 1$} (e3y);
            \path (e3y) edge node[below] {$\charge, 1$} (e4y);
            \path (e2y) edge[sloped] node[fill=white] {$\move_2, 5$} (e0n);
            \path (e3y) edge[sloped] node[fill=white] {$\move_2, 5$} (e1n);
            \path (e4y) edge[sloped] node[fill=white] {$\move_2, 5$} (e2n);
            \path (e1n) edge[sloped] node[fill=white] {$\snapshot_1, 2$} (e0y);
            \path (e2n) edge[sloped] node[fill=white] {$\snapshot_1, 2$} (e1y);
            \path (e3n) edge[sloped] node[fill=white] {$\snapshot_1, 2$} (e2y);
            \path (e4n) edge[sloped] node[fill=white] {$\snapshot_1, 2$} (e3y);
            \path (e2n) edge[bend right,looseness=0.8] node[above] {$\move_2, 7$} (e0n);
            \path (e3n) edge[bend right,looseness=0.8] node[above] {$\move_2, 7$} (e1n);
            \path (e4n) edge[bend right,looseness=0.8] node[above] {$\move_2, 7$} (e2n);
        \end{scope}
    \end{tikzpicture}
    \caption{The composition of the SCAs $A_\e$ and $A_\s$, dubbed $A_{\e, \s}$: a component modeling energy and snapshot management. We abbreviate pairs of states $\angl{q_i, q_j}$ by writing $q_{i,j}$.}\label{figure:sca-energy-snapshot}
\end{figure}

The structure of $A_{\e, \s}$ reflects that of $A_\e$ and $A_\s$; for instance, in state $q_{2,Y}$ two units of energy remain, and we have a snapshot of the current location. The same holds for the transitions of $A_{\e,\s}$; for example, $q_{2,N} \myrightarrow{\snapshot_1,\ 2} q_{1,Y}$ is the result of composing $q_2 \myrightarrow{\discharge_1,\ 2} q_1$ and $q_N \myrightarrow{\snapshot,\ 0} q_Y$. 

Also, note that in $A_{\e, \s}$ the preference of the $\move_2$-transitions at the top of the figure is lower than the preference of the diagonally-drawn $\move_2$-transitions. This difference arises because the component transition in $A_\s$ of the former is $q_N \myrightarrow{\move,\ 2} q_N$, while that of the latter is $q_Y \myrightarrow{\move,\ 0} q_N$. As such, the preferences of the component SCAs manifest in the preferences of the composed SCA\@.

The action $\snapshot_1$ is not available in states of the form $q_{i,Y}$, because the only action available in $q_Y$ is $\pass$, which does not compose into $\snapshot_1$.

\subsection{Behavioral semantics}%
\label{section:behavioral-semantics}

The final part of our component model is a description of the behavior of SCAs\@. Here, the threshold determines which actions have sufficient preference for inclusion in the behavior. Intuitively, the threshold is an indication of the amount of flexibility allowed. In the context of composition, lowering the threshold of a component is a form of compromise: the component potentially gains behavior available for composition. Setting a lower threshold makes a component more permissive, but may also make it harder (or impossible) to achieve its goal.

The question of where to set the threshold is one that the designer of the system should answer based on the properties and level of flexibility expected from the component; Section~\ref{section:linear-temporal-logic} addresses the formulation of these properties, while Section~\ref{section:diagnostics} talks about adjusting the threshold.

\begin{definition}%
\label{definition:language}
Let $A = \angl{Q, \Sigma, \abscsemiring, \rightarrow, q^0, t}$ be an SCA\@. We say that a stream $\sigma \in \Sigma^\omega$ is a \emph{behavior} of $A$ when there exist streams $\mu \in Q^\omega$ and $\nu \in \abscsemiring^\omega$ such that $\mu(0) = q^0$, and for all $n \in \naturals$, $t \leq \nu(n)$ and $\mu(n) \myrightarrow{\sigma(n),\ \nu(n)} \mu(n+1)$. The set of behaviors of $A$, denoted by $L(A)$, is called the \emph{language} of $A$.
\end{definition}

We note the similarity between the behavior of an SCA and that of \buchi-automata~\cite{buchi-1962}; we elaborate on this in 
\iftechreport%
Appendix~\ref{appendix:decision-procedure}.
\else%
the accompanying technical report~\cite{kappe-arbab-talcott-2017-techreport}.
\fi

To account for states that lack outgoing transitions, one could include implicit transitions labelled with $\halt$ (and some appropriate preference) to an otherwise unreachable ``halt state'', with a $\halt$ self-loop. Here, we set for all $\alpha \in \Sigma$ that $\halt \composable \alpha$ and $\halt \compose \alpha = \halt$. To simplify matters, we do not elaborate on this.

Consider $\sigma = \angl{\snapshot, \move, \move}^\omega$ and $\tau  = \angl{\snapshot, \move, \pass}^\omega$. We can see that when $t_\s = 2$, both are behaviors of $A_\s$; when $t_\s = 1$, $\tau$ is a behavior of $A_\s$, while $\sigma$ is not, since every second $\move$-action in $\sigma$ has preference $2$. More generally, if $A$ and $A'$ are SCAs over c-semiring $\abscsemiring$ that only differ in their threshold values $t, t' \in \abscsemiring$, and $t \leq t'$, then $L(A') \subseteq L(A)$. In the case of $A_\e$, the threshold can be interpreted as a bound on the amount of energy to be spent in a single action; if $t_\e < 5$, then behaviors with $\discharge_2$ do not occur in $L(A_\e)$. 

Interestingly, if $A_1$ and $A_2$ are SCAs, then $L(A_1 \bowtie A_2)$ is not uniquely determined by $L(A_1)$ and $L(A_2)$. For example, suppose that $t_\e = 4$ and $t_\s = 1$, and consider $L(A_{\e, \s})$, which contains $\angl{\snapshot} \cdot \angl{\move, \snapshot, \charge, \charge, \charge}^\omega$ even though the corresponding stream of component actions in $A_\e$, i.e., the stream $\angl{\discharge_1} \cdot \angl{\discharge_2, \discharge_1, \charge, \charge, \charge}^\omega$ is not contained in $L(A_\e)$. This is a consequence of a more general observation for c-semirings, namely that $t \leq e$ and $t' \leq e'$ is sufficient but not necessary to derive $t \otimes t' \leq e \otimes e'$.

\section{Linear Temporal Logic}%
\label{section:linear-temporal-logic}

We now turn our attention to verifying the behavior of an agent, by means of a simple dialect of Linear Temporal Logic (LTL). The aim of extending LTL is to reflect the compositional nature of the actions. This extension has two aspects, which correspond roughly to the relations $\sqsubseteq$ and $\composable$: reasoning about behaviors that \emph{capture} (i.e., are composed of) other behaviors, and about behaviors that are \emph{composable} with other behaviors. For instance, consider the following scenarios:
\begin{enumerate}[(i)]
    \item\label{scenario:captures} We want to verify that under certain circumstances, the drone performs a series of actions where it goes north before taking a snapshot. This is useful when, for this particular property, we do not care about other actions that may also be performed while or as part of going north, for instance, whether or not the drone engages in communications while moving.
    \item\label{scenario:composable} We want to verify that every behavior of the snapshot-component is composable with some behavior that eventually recharges. This is useful when we want to abstract away from the action that allows recharging, i.e., it is not important which particular action composes with $\charge$.
\end{enumerate}
Our logic aims to accommodate both scenarios, by providing two new connectives: $\captures \phi$ describes every behavior that captures a behavior validating $\phi$, while $\composable \phi$ holds for every behavior composable with a behavior validating $\phi$.

\subsection{Syntax and semantics}%
\label{section:syntax-and-semantics}

The syntax of the LTL dialect we propose for SCAs contains atoms, conjunctions, negation, and the ``until'' and ``next'' connectives, as well as the unary connectives $\composable$ and $\captures$. Formally, given a CAS $\Sigma$, the language $\mathcal{L}_\Sigma$ is generated by the grammar
\[
\phi, \psi ::= \top \dash a \in \Sigma \dash \phi \wedge \psi \dash \phi \until \psi \dash \nxt \phi \dash \neg \phi \dash \captures \phi \dash \composable \phi
\]
As a convention, unary connectives take precedence over binary connectives. For example, $\captures \phi \until \neg \psi$ should be read as $(\captures \phi) \until (\neg \psi)$. We use parentheses to disambiguate formulas where this convention does not give a unique bracketing.

The semantics of our logic is given as a relation $\models_\Sigma$ between $\Sigma^\omega$ and $\mathcal{L}_\Sigma$; to be precise, $\models_\Sigma$ is the smallest such relation that satisfies the following rules
\[
\inferrule{%
    \sigma \in \Sigma^\omega
}{%
    \sigma \models_\Sigma \true
}
\quad
\inferrule{%
    \sigma \in \Sigma^\omega
}{%
    \sigma \models_\Sigma \sigma(0)
}
\quad
\inferrule{%
    \sigma \models_\Sigma \phi \\
    \sigma \models_\Sigma \psi
}{%
    \sigma \models_\Sigma \phi \wedge \psi
}
\]
\[
\inferrule{%
    n \in \naturals \\
    \forall k < n.\ \sigma^{(k)} \models_\Sigma \phi \\
    \sigma^{(n)} \models_\Sigma \psi
}{%
    \sigma \models_\Sigma \phi \until \psi
}
\quad
\inferrule{%
    \sigma^{(1)} \models_\Sigma \phi
}{%
    \sigma \models_\Sigma \nxt \phi
}
\]
\[
\inferrule{%
    \sigma \not\models_\Sigma \phi
}{%
    \sigma \models_\Sigma \neg\phi
}
\quad
\inferrule{%
    \sigma \models_\Sigma \phi \\
    \sigma \sqsubseteq^\omega \tau
}{%
    \tau \models_\Sigma \captures \phi
}
\quad
\inferrule{%
    \sigma \models_\Sigma \phi \\
    \sigma \composable^\omega \tau
}{%
    \tau \models_\Sigma \composable \phi
}
\]
in which $\sqsubseteq^\omega$ and $\composable^\omega$ are the pointwise extensions of the relations $\sqsubseteq$ and $\composable$, i.e., $\sigma \sqsubseteq^\omega \tau$ when, for all $n \in \naturals$, it holds that $\sigma(n) \sqsubseteq \tau(n)$, and similarly for $\composable^\omega$.

Although the atoms of our logic are formulas of the form $\phi = a \in \Sigma$ that have an exact matching semantics, in general one could use predicates over $\Sigma$. We chose not to do this to keep the presentation of examples simple.

The semantics of $\composable$ and $\captures$ match their descriptions: if $\sigma \in \Sigma^\omega$ is described by $\phi$ (i.e., $\sigma \models_\Sigma \phi$) and $\tau \in \Sigma^\omega$ captures this $\sigma$ at every action (i.e., $\sigma \sqsubseteq^\omega \tau$), then $\tau$ is a behavior described by $\captures \phi$ (i.e., $\tau \models_\Sigma \captures \phi$). Similarly, if $\rho \in \Sigma^\omega$ is described by $\phi$ (i.e., $\rho \models_\Sigma \phi$), and this $\rho$ is composable with $\sigma \in \sigma^\omega$ at every action (i.e., $\sigma \composable^\omega \rho$), then $\rho$ is described by $\composable \phi$ (i.e., $\rho \models_\Sigma \composable \phi$).

As usual, we obtain disjunction ($\phi \vee \psi$), implication ($\phi \implies \psi$), ``always'' ($\always \phi$) and ``eventually'' ($\eventually \phi$) from these connectives. For example, $\eventually \phi$ is defined as $\true \until \phi$, meaning that, if $\sigma \models_\Sigma \eventually \phi$, there exists an $n \in \naturals$ such that $\sigma^{(n)} \models_\Sigma \phi$. The operator $\composable$ has an interesting dual that we shall consider momentarily.

We can extend $\models_\Sigma$ to a relation between SCAs (with underlying c-semiring $\abscsemiring$ and CAS $\Sigma$) and formulas in $\mathcal{L}_\Sigma$, by defining $A \models_\Sigma \phi$ to hold precisely when $\sigma \models_\Sigma \phi$ for all $\sigma \in L(A)$. In general, we can see that fewer properties hold as the threshold $t$ approaches the lowest preference in its semiring, as a consequence of the fact that decreasing the threshold can only introduce new (possibly undesired) behavior. Limiting the behavior of an SCA to some desired behavior described by a formula thus becomes harder as the threshold goes down, since the set of behaviors exhibited by that SCA is typically larger for lower thresholds. 

We view the tradeoff between available behavior and verified properties as essential and desirable in the design of robust autonomous systems, because it represents two options available to the designer. On the one hand, she can make a component more accommodating in composition (by lowering the threshold, allowing more behavior) at the cost of possibly losing safety properties. On the other hand, she can restrict behavior such that a desired property is guaranteed, at the cost of possibly making the component less flexible in composition.

\paragraph{Example: no wasted moves}
Suppose we want to verify that the agent never misses an opportunity to take a snapshot of a new location. This can be expressed by
\[
\phi_\w = \captures \always (\move \implies \nxt (\neg \move \until \snapshot))
\]
This formula reads as ``every behavior captures that, at any point, if the current action is a move, then it is followed by a sequence where we do not move until we take a snapshot''. Indeed, if $t_e \otimes t_s = 5$, then $A_{\e, \s} \models_\Sigma \phi_\w$, since in this case every behavior of $A_{\e, \s}$ captures that between $\move$-actions we find a $\snapshot$-action. However, if $t_e \otimes t_s = 7$, then $A_{\e, \s} \not\models_\Sigma \phi_\w$, since $\angl{\move_2, \move_2, \charge, \charge, \charge, \charge}^\omega$ would be a behavior of $A_{\e, \s}$ that does not satisfy $\phi_\w$, as it contains two successive actions that capture $\move$.\footnote{Recall that $\move_2$ is the composition of $\move$ and $\discharge_2$, i.e., $\move \sqsubseteq \move_2$.} This shows the primary use of $\captures$, which is to verify the behavior of a component in terms of the behavior contributed by subcomponents.


\paragraph{Example: verifying a component interface}
Another application of the operator $\composable$ is to verify properties of the behavior composable with a component. Suppose we want to know whether all behaviors composable with a behavior of $A$ validate $\phi$. Such a property is useful, because it tells us that, in composition, $A$ filters out the behaviors of the other operand that do not satisfy $\phi$. Thus, if every behavior that composes with a behavior of $A$ indeed satisfies $\phi$, we know something about the behavior \emph{imposed} by $A$ in composition. Perhaps surprisingly, this use can be expressed using the $\composable$-connective, by checking whether $A \models_\Sigma \neg \composable \neg \phi$ holds; for if this is the case, then for all $\sigma, \tau \in \Sigma^\omega$ with $\sigma$ a behavior of $A$ and $\sigma \composable^\omega \tau$, we know that $\sigma \not\models_\Sigma \composable \neg \phi$, thus in particular $\tau \not\models_\Sigma \neg \phi$ and therefore $\tau \models_{\Sigma} \phi$.

More concretely, consider the component $A_\e$. From its structure, we can tell that the action $\charge$ must be executed at least once every five moves. Thus, if $\tau$ is composable with a behavior of $A_\e$, then $\tau$ must also execute some action composable with $\charge$ once every five moves. This claim can be encoded by
\[
\phi_\cc = \neg \composable \neg \always \left(\nxt \composable \charge \vee \nxt^2 \composable \charge \vee \dots \vee \nxt^5 \composable \charge \right)
\]
where $\nxt^n$ denotes repeated application of $\nxt$. If $A_\e \models_\Sigma \phi_\cc$, then every behavior of $A_\e$ is incomposable with behavior where, at some point, one of the next five actions is not composable with with $\charge$. Accordingly, if $\sigma \in \Sigma^\omega$ is  composable with some behavior of $A_\e$, then, at every point in $\sigma$, one of the next five actions must be composable with $\charge$. All behaviors that fail to meet this requirement are excluded from the composition.

\subsection{Decision procedure}%
\label{section:decision-procedure}

We developed a procedure to decide whether $A \models_\Sigma \phi$ holds for a given SCA $A$ and $\phi \in \mathcal{L}_\Sigma$. 
\iftechreport%
The full details of this procedure are given in Appendix~\ref{appendix:decision-procedure};
\else%
To save space, the details of this procedure, which involve relating SCAs to \buchi-automata, appear in the accompanying technical report;
\fi%
the main results are summarized below.

\begin{proposition}%
\label{proposition:decision-procedure}
Let $\phi \in \mathcal{L}_\Sigma$. Given an SCA $A$ and CAS $\Sigma$, the question whether $A \models_\Sigma \phi$ is decidable. In case of a negative answer, we obtain a stream $\sigma \in \Sigma^\pi$ such that $\sigma \in L(A)$ but $\sigma \not\models_\Sigma \phi$. The total worst-case complexity is bounded by a stack of exponentials in $|\phi|$, i.e., 
\(
2^{.^{.^{.^{|\phi|}}}}
\),
whose height is the maximal nesting depth of $\captures$ and $\composable$ in $\phi$, plus one. 
\end{proposition}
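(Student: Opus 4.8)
The plan is to follow the automata-theoretic approach to LTL model checking~\cite{vardi-1995}: turn both the automaton and the formula into \buchi-automata over $\Sigma$ and test a product for emptiness. First I would convert the SCA $A$ into a \buchi-automaton $B_A$ by discarding every transition whose preference is not bounded below by the threshold $t$, relabelling the remaining transitions by their action alone, and marking all states accepting; Definition~\ref{definition:language} then gives $L(B_A) = L(A)$. Since $A \models_\Sigma \phi$ fails exactly when some behaviour of $A$ satisfies $\neg\phi$, it suffices to build a \buchi-automaton $\mathcal{B}_{\neg\phi}$ recognising $\{\sigma \in \Sigma^\omega : \sigma \models_\Sigma \neg\phi\}$ and to decide whether $L(B_A) \cap L(\mathcal{B}_{\neg\phi}) = \emptyset$. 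The whole problem thus reduces to constructing, for an arbitrary $\psi \in \mathcal{L}_\Sigma$, a \buchi-automaton $\mathcal{B}_\psi$ whose language is the set of models of $\psi$, and to bounding its size.

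For the fragment of $\mathcal{L}_\Sigma$ free of $\captures$ and $\composable$ this is ordinary LTL, so the usual subset-style translation produces $\mathcal{B}_\psi$ of size singly exponential in $|\psi|$, handling $\wedge$, $\nxt$ and $\until$ at once. The two new connectives I would read as pointwise image operations on languages. Since $\tau \models_\Sigma \captures\psi$ holds iff some $\sigma$ with $\sigma \sqsubseteq^\omega \tau$ satisfies $\psi$, an automaton for the models of $\captures\psi$ is obtained from $\mathcal{B}_\psi$ by \emph{existential relabelling}: every $a$-labelled transition is replaced by a $b$-labelled transition for each $b \in \Sigma$ with $a \sqsubseteq b$. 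This is a projection; it stays nondeterministic and causes no state blow-up, and $\composable\psi$ is treated identically with $\composable$ replacing $\sqsubseteq$.

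The cost appears when such a subformula must be combined with the rest of $\psi$. Because $\captures$ and $\composable$ are genuinely existential and possess no syntactic dual, a negated occurrence $\neg\captures\psi$ cannot be pushed inward, and to recognise its models one must complement the relabelled automaton for $\captures\psi$; more generally, using $\captures\psi$ as a side condition requires deciding suffix-membership in its models-language at every position, which for a nondeterministic \buchi-automaton again goes through complementation. \buchi-complementation is exponential, and this is the single exponential paid per level. The size bound then follows by induction on the maximal nesting depth $d$ of $\captures$ and $\composable$ in $\psi$. The base case $d=0$ is the singly-exponential LTL automaton, the bottom of the tower. For the step I would isolate the maximal $\captures/\composable$-subformulas $\chi_1,\dots,\chi_m$ of $\psi$; each has the form $\captures\psi_i$ or $\composable\psi_i$ with $\psi_i$ of nesting depth at most $d-1$, so by induction $\mathcal{B}_{\psi_i}$, and hence the relabelled automaton for $\chi_i$, is a tower of height at most $d$, and complementing it adds exactly one level to give height $d+1$. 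Treating the $\chi_i$ as fresh atoms, the surrounding formula is pure LTL and contributes only a factor singly exponential in $|\psi|$ together with products against $\mathcal{B}_{\chi_i}$ and its complement; these polynomial and first-level factors are absorbed without raising the height, so $\mathcal{B}_\psi$ — and with it the emptiness test, which is polynomial in automaton size — stays within a stack of exponentials of height $d+1$.

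For the counterexample, I would use that non-emptiness of a \buchi-automaton is witnessed by a reachable accepting state lying on a cycle, i.e.\ by a lasso; the stream it spells has the form $\sigma_h \cdot \sigma_t^\omega$ and hence lies in $\Sigma^\pi$, and by construction it belongs to $L(B_A) = L(A)$ while satisfying $\neg\phi$, so $\sigma \not\models_\Sigma \phi$. The step I expect to be the main obstacle is the faithful compilation of $\captures$ and $\composable$: one must show that evaluating these existential operators as side conditions and under negation genuinely reduces to complementing their \buchi-automata, and that nesting these complementations stacks the exponentials so that the height is \emph{exactly} the nesting depth plus one — in particular that the product and base-LTL overhead introduced at each level is absorbed rather than contributing an extra level.
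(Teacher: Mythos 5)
Your first and last steps coincide with the paper's: the SCA-to-\buchi{} translation by pruning transitions below the threshold and accepting everywhere is exactly the paper's Lemma~\ref{lemma:sca-to-buchi}, the lasso argument for producing a counterexample in $\Sigma^\pi$ is the same, and checking emptiness of $L(B_A)\cap L(\mathcal{B}_{\neg\phi})$ is equivalent to the paper's containment check. Your existential relabelling for $\captures$ and $\composable$ on nondeterministic \buchi-automata is also precisely the paper's construction (claims (vi) and (vii) of Lemma~\ref{lemma:connectives-on-automata}). Where you genuinely diverge is in how the pieces are assembled. The paper gives two assemblies: a fully recursive one (which it concedes only yields a tower as high as the nesting depth of \emph{negations}, not of $\captures/\composable$), and the one that actually achieves the stated bound, namely the subformula construction via \emph{alternating} \buchi-automata: maximal $\captures/\composable$-subformulas are compiled bottom-up into BAs, each level paying one exponential in the ABA-to-BA conversion, and these BAs are embedded as black-box states in the alternating automaton for the surrounding formula, where negation and conjunction are free. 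A bonus of staying at the BA level for the relabelling, as you do, is that you never meet the pitfall the paper spends a page on: relabelling an \emph{alternating} automaton for $\captures$ is unsound (the paper's $\captures\nxt(a\wedge b)$ example), because the existential choice of the captured action must be shared across all branches of a run.

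The gap is exactly where you flagged it, and it is real: your accounting of the per-level cost does not close. Treating $\chi_i=\captures\psi_i$ as a fresh atom is not a mere product construction, because the truth of $\chi_i$ at position $n$ of $\sigma$ depends on the entire suffix $\sigma^{(n)}$. The automaton for the surrounding formula must therefore \emph{guess} the truth value of each $\chi_i$ at every position and \emph{verify} all of these guesses simultaneously; merging the unboundedly many launched runs of $\mathcal{B}_{\chi_i}$ (for positive guesses) and of its complement (for negative guesses) into a single nondeterministic \buchi-automaton is an ETL-style construction that costs an exponential in the sizes of those component automata. Since the complement of $\mathcal{B}_{\chi_i}$ is already one exponential larger than $\mathcal{B}_{\chi_i}$, a naive count gives a tower of height $d+2$ rather than $d+1$, so the claim that ``the product and base-LTL overhead is absorbed'' needs an argument you have not supplied. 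The paper's fix is to make the outer layer alternating, so that launching a fresh copy of a sub-automaton at each position is free and negation is handled by dualizing the alternating transition structure rather than by complementing a nondeterministic automaton; the single exponential per level is then paid only once, in the ABA-to-BA conversion. To repair your version you would either need to adopt that alternating outer layer, or show directly that verifying the negative guesses can be done without a further subset construction on top of the already-complemented automaton.
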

This complexity is impractical in general, but we suspect that the nesting depth of $\captures$ and $\composable$ is at most two for almost all use cases. We exploit the counterexample in Section~\ref{section:diagnostics}.


\section{Diagnostics}%
\label{section:diagnostics}

Having developed a logic for SCAs as well as its decision procedure, we investigate how a designer can cope with undesirable behavior exhibited by the agent, either as a run-time behavior $\sigma$, or as a counterexample $\sigma$ to a formula found at design-time (obtained through Proposition~\ref{proposition:decision-procedure}). The tools outlined here can be used by the designer to determine the right threshold value for a component given the properties that the component (or the system at large) should satisfy.

\subsection{Eliminating undesired behavior}
A simple way to counteract undesired behavior is to see if the threshold can be raised to eliminate it --- possibly at the cost of eliminating other behavior. For instance, in Section~\ref{section:syntax-and-semantics}, we saw a formula $\phi_\w$ such that $A_{\e, \s} \not\models_\Sigma \phi_\w$, with counterexample $\sigma = \angl{\move_2, \move_2, \charge, \charge, \charge, \charge}^\omega$, when $t_\e \otimes t_\s = 7$. Since all $\move_2$-labeled transitions of $A_{\e, \s}$ have preference $7$, raising\footnote{Recall that $7 \leq_{\wcsemiring} 5$, so $5$ is a ``higher'' threshold in this context.} $t_\e \otimes t_\s$ to $5$ ensures that $\sigma$ is not present in $L(A_{\e, \s})$; indeed, if $t_\e \otimes t_\s = 5$, then $A_{\e, \s} \models_\Sigma \phi_\w$. We should be careful not to raise the threshold too much: if $t_\e \otimes t_\s = 0$, then $L(A_{\e, \s}) = \emptyset$, since every behavior of $A_{\e, \s}$ includes a transition with a non-zero weight --- with threshold $t_e \otimes t_\s = 0$, $A_{\e, \s} \models_\Sigma \psi$ holds for \emph{any} $\psi$.

In general, since raising the threshold does not add new behavior, this does not risk adding additional undesired behavior. The only downside to raising the threshold is that it possibly eliminates desirable behavior. We define the \emph{diagnostic preference} of a behavior as a tool for finding such a threshold.

\begin{definition}
Let $A = \angl{Q, \Sigma, \abscsemiring, \rightarrow, q^0, t}$ be an SCA, and let $\sigma \in \Sigma^\pi \cup \Sigma^*$. The \emph{diagnostic preference} of $\sigma$ in $A$, denoted $d_A(\sigma)$, is calculated as follows:
\begin{enumerate}
    \item Let $Q_0$ be $\{ q^0 \}$, and for $n < |\sigma|$ set $Q_{n+1} = \{ q' : q \in Q_n,\ q \myrightarrow{\sigma(n),\ e} q' \}$.
    \item Let $\xi \in \abscsemiring^\pi \cup \abscsemiring^*$ be the stream such that $\xi(n) = \bigoplus \{ e : q \in Q_n,\ q \myrightarrow{\sigma(n),\ e} q' \}$.
    \item $d_A(\sigma) = \bigwedge \{ \xi(n) : n \leq |\sigma| \}$, with $\bigwedge$ the greatest lower bound operator of $\abscsemiring$.
\end{enumerate}
\end{definition}

Since $\sigma$ is finite or eventually periodic, and $Q$ is finite, $\xi$ is also finite or eventually periodic. Consequently, $d_A(\sigma)$ is computable.

\begin{lemma}%
\label{lemma:diagnostic-preference-bound}
Let $A = \angl{Q, \Sigma, \abscsemiring, \rightarrow, q^0, t}$ be an SCA, and let $\sigma \in \Sigma^\pi \cup \Sigma^*$. If $\sigma \in L(A)$, or $\sigma$ is a finite prefix of some $\tau \in L(A)$, then $t \leq_\abscsemiring d_A(\sigma)$.
\end{lemma}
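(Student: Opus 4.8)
The plan is to exploit the witnessing run guaranteed by the definition of behavior and to show that it is subsumed by the sets $Q_n$ and the values $\xi(n)$ used to compute $d_A(\sigma)$. First I would reduce the goal $t \leq_\abscsemiring d_A(\sigma)$ to the family of inequalities $t \leq_\abscsemiring \xi(n)$, one for each index $n$ in the range of $\xi$. This reduction is immediate from the fact that $d_A(\sigma) = \bigwedge \{ \xi(n) \}$ is the greatest lower bound: since c-semirings are complete lattices, $t$ lies below the greatest lower bound of a set precisely when $t$ is a lower bound of that set, i.e.\ when $t \leq_\abscsemiring \xi(n)$ for every $n$.

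Next I would fix a witnessing run in both cases uniformly. If $\sigma \in L(A)$, Definition~\ref{definition:language} supplies streams $\mu$ and $\nu$ with $\mu(0) = q^0$, $t \leq_\abscsemiring \nu(n)$, and $\mu(n) \myrightarrow{\sigma(n),\ \nu(n)} \mu(n+1)$ for all $n \in \naturals$. If instead $\sigma$ is a finite prefix of some $\tau \in L(A)$, I would take the run witnessing $\tau \in L(A)$ and restrict attention to the indices $n < |\sigma|$; because $\sigma$ and $\tau$ agree on these indices, the sets $Q_n$ and values $\xi(n)$ computed from $\sigma$ coincide with those arising along the run for $\tau$ on this range, so both cases become identical.

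The core of the argument is a short induction establishing $\mu(n) \in Q_n$ for every relevant $n$. The base case $\mu(0) = q^0 \in \{ q^0 \} = Q_0$ is immediate, and if $\mu(n) \in Q_n$, then the transition $\mu(n) \myrightarrow{\sigma(n),\ \nu(n)} \mu(n+1)$ places $\mu(n+1) \in Q_{n+1}$ by the very definition of $Q_{n+1}$. Consequently, for each $n$ the triple $\mu(n), \sigma(n), \nu(n)$ witnesses that $\nu(n)$ is one of the summands defining $\xi(n) = \bigoplus \{ e : q \in Q_n,\ q \myrightarrow{\sigma(n),\ e} q' \}$. Since $\bigoplus$ is the least upper bound, every summand lies below it, so $\nu(n) \leq_\abscsemiring \xi(n)$; combining this with $t \leq_\abscsemiring \nu(n)$ and transitivity of $\leq_\abscsemiring$ yields $t \leq_\abscsemiring \xi(n)$, which is exactly what the reduction in the first step requires.

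I do not expect a serious obstacle here; the argument is essentially bookkeeping over the lattice structure. The only point demanding care is the prefix case, where I must confirm that the quantities $Q_n$ and $\xi(n)$, defined purely from $\sigma$, genuinely agree with those encountered along the run for the extension $\tau$ over the finite index range under consideration, so that each witnessing preference $\nu(n)$ is indeed a summand of the $\xi(n)$ computed from $\sigma$. Once that observation is in place, the two lattice facts --- $\bigoplus$ as least upper bound and $\bigwedge$ as greatest lower bound --- do the remaining work.
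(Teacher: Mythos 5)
Your proposal is correct and follows essentially the same argument as the paper: the induction showing $\mu(n) \in Q_n$, the chain $t \leq_\abscsemiring \nu(n) \leq_\abscsemiring \xi(n)$ via the least-upper-bound property of $\bigoplus$, and the greatest-lower-bound step for $d_A(\sigma)$. The only cosmetic difference is in the prefix case, where the paper derives $t \leq_\abscsemiring d_A(\tau)$ first and then uses $d_A(\tau) \leq_\abscsemiring d_A(\sigma)$, while you restrict the run for $\tau$ to the indices of $\sigma$ directly; both are fine.
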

\begin{proof}
If $\sigma \in L(A)$, there exist streams $\mu \in Q^\omega$ and $\nu \in \abscsemiring^\omega$ such that $\mu(n) = q^0$, and for all $n \in \naturals$, $t \leq \nu(n)$ and $\mu(n) \myrightarrow{\sigma(n),\ \nu(n)} \mu(n+1)$. It is not hard to see that $\mu(n) \in Q_n$ for $n \in \naturals$. Then also $t \mathrel{\leq_\abscsemiring} \nu(n) \mathrel{\leq_\abscsemiring} \xi(n)$ for all $n \in \naturals$. Thus, $t \leq_\abscsemiring d_A(\sigma)$. Likewise, if $\sigma$ is a finite prefix of some $\tau \in L(A)$, then $t \leq_\abscsemiring d_A(\tau)$ by the above, and $d_A(\tau) \leq_\abscsemiring d_A(\sigma)$ by definition of $d_A$, thus $t \leq_\abscsemiring d_A(\sigma)$.
\end{proof}

Since $d_A(\sigma)$ is a necessary upper bound on $t$ when $\sigma$ is a behavior of $A$, it follows that we can exclude $\sigma$ from $L(A)$ if we choose $t$ such that $t \not\leq_\abscsemiring d_A(\sigma)$. In particular, if we choose $t$ such that $d_A(\sigma) <_\abscsemiring t$, then $\sigma \not\in L(A)$. Note that this may not always be possible: if $d_A(\sigma)$ is $\cstop$ then such a $t$ does not exist.

Note that there may be another threshold (i.e., not obtained by Lemma~\ref{lemma:diagnostic-preference-bound}), which may also eliminate fewer desirable behaviors. Thus, while this lemma gives helps to choose a threshold to exclude some behaviors, it is not a definitive guide. 
\iftechreport%
We refer to Appendix~\ref{appendix:caveat} for a concrete example.
\else%
The accompanying technical report~\cite{kappe-arbab-talcott-2017-techreport} contains a concrete example.
\fi%

\subsection{Localizing undesired behavior}

One can also use the diagnostic preference to identify the components that are involved in allowing undesired behavior. Let us revisit the first example from Section~\ref{section:syntax-and-semantics}, where we verified that every pair of $\move$-actions was separated by at least one $\snapshot$ action, as described in $\phi_\w$. Suppose we choose $t_\e = 10$ and $t_\s = 1$; then $t_\e \otimes t_\s = 11$, thus $\sigma = \angl{\move_2, \charge, \charge}^\omega \in L(A_\s)$, meaning $A_{\e,\s} \not\models_\Sigma \phi_\w$. By Lemma~\ref{lemma:diagnostic-preference-bound}, we find that $11 = t_{\e, \s} = t_\e \otimes t_\s \leq_\wcsemiring d_{A_{\e,\s}}(\sigma) = 7$.
Even if $A_\s$'s threshold were as strict as possible (i.e., $t_\s = 0 = \cstop_\wcsemiring$), we would find that $t_\e \otimes t_\s \leq_\wcsemiring d_{A_{\e,\s}}(\sigma)$, meaning that we cannot eliminate $\sigma$ by changing $t_\s$ only. In some sense, we could say that $t_\e$ is responsible for $\sigma$.\footnote{Arguably, $A_\e$ as a whole may not be responsible, because modifying the preference of the $\move$-loop on $q_N$ in $A_\s$ can help to exclude the undesired behavior as well. In our framework, however, the threshold is a generic property of any SCA, and so we use it as a handle for talking about localizing undesired behaviors to component SCAs.} 

More generally, let ${(A_i)}_{i \in I}$ be a finite family of automata over the c-semiring $\abscsemiring$ with thresholds ${(t_i)}_{i \in I}$. Furthermore, let $A = \bigbowtie_{i \in I} A_i$ and let $\psi$ be such that $A \not\models_\Sigma \psi$, with counterexample behavior $\sigma$. Suppose now that for some $J \subseteq I$, we have \( \bigotimes\nolimits_{i \in J} t_i \leq_\abscsemiring d_A(\sigma) \). Since $\otimes$ is intensive, we furthermore know that \( \bigotimes\nolimits_{i \in I} t_i \leq_\abscsemiring \bigotimes\nolimits_{i \in J} t_i \). Therefore, at least one of $t_i$ for $i \in J$ must be adjusted to exclude the behavior $\sigma$ from the language of $\bigbowtie_{i \in I} A_i$.

We call ${(t_i)}_{i \in J}$ \emph{suspect} thresholds: \emph{some} $t_i$ for $i \in I$ must be adjusted to eliminate $\sigma$; by extension, we refer to $J$ as a \emph{suspect subset} of $I$. Note that $I$ may have distinct and disjoint suspect subsets.
If $J \subseteq I$ is disjoint from every suspect subset of $I$, then $J$ is called \emph{innocent}. If $J$ is innocent, changing $t_j$  for some $j \in J$ (or even $t_j$ for all $j \in J$) alone does not exclude $\sigma$. Finding suspect and innocent subsets of $I$ thus helps in finding out which thresholds need to change in order to exclude a specific undesired behavior.
\begin{algorithm}
\SetKwFunction{FindSuspect}{FindSuspect}
\SetKwProg{Fn}{Function}{:}{end}
\Fn{\FindSuspect(I)}{%
    $M := \emptyset$\;
    \ForEach{$i \in I$}{%
       \If{$I \setminus \{ i \}$ is suspect}{%
           $M := M \cup \FindSuspect(I \setminus \{ i \})$\;
       }
    }
    \eIf{$M = \emptyset$}{%
       \Return{$\{ I \}$}\;
    }{%
       \Return{$M$}\;
    }
}
\caption{Algorithm to find minimal suspect subsets.}\label{figure:algorithm-find-minimal-suspect-subsets}
\end{algorithm}

Algorithm~\ref{figure:algorithm-find-minimal-suspect-subsets} gives pseudocode to find minimal suspect subsets of a suspect set $I$; we argue correctness of this algorithm in Theorem~\ref{theorem:minimal-suspect-subsets-correctness}; for a proof, see~\cite{kappe-arbab-talcott-2017-techreport}.

\begin{theorem}%
\label{theorem:minimal-suspect-subsets-correctness}
If $I$ is suspect and $d_A(\sigma) < \cstop$, then $\mathtt{FindSuspect}(I)$ contains exactly the minimal suspect subsets of $I$. 
\end{theorem}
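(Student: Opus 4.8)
\noindent The plan is to reduce the whole statement to one structural fact about the family of suspect subsets and then match the two branches of $\mathtt{FindSuspect}$ against an induction on $|I|$. Writing $J$ \emph{suspect} for $\bigotimes_{i \in J} t_i \leq_\abscsemiring d_A(\sigma)$, the crucial observation is that the collection of suspect subsets of $I$ is \emph{upward closed}: if $J \subseteq K \subseteq I$ then $\bigotimes_{i \in K} t_i = \bigl(\bigotimes_{i \in J} t_i\bigr) \otimes \bigl(\bigotimes_{i \in K \setminus J} t_i\bigr)$, and since $\otimes$ is intensive this is $\leq_\abscsemiring \bigotimes_{i \in J} t_i$, so $J$ suspect forces $K$ suspect. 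Two consequences matter. The hypothesis $d_A(\sigma) < \cstop$ makes $\emptyset$ \emph{non}-suspect, since its product is the empty product $\cstop$ and $\cstop \leq_\abscsemiring d_A(\sigma)$ holds only when $d_A(\sigma) = \cstop$; hence every minimal suspect subset is nonempty, and the precondition ``the argument is suspect'' is preserved by the recursion, which only descends to $I \setminus \{i\}$ when that set is suspect and thus never reaches $\emptyset$.

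\noindent Next I would prove the local characterisation of minimality that the algorithm implicitly relies on: a suspect set $I$ is a \emph{minimal} suspect subset exactly when $I \setminus \{i\}$ is non-suspect for every $i \in I$. The forward direction is immediate; for the converse, if some proper $J \subsetneq I$ were suspect, choosing $i \in I \setminus J$ gives $J \subseteq I \setminus \{i\}$, and upward closure makes $I \setminus \{i\}$ suspect, a contradiction. I would also record the ``absoluteness'' of minimality: whether $S$ is a minimal suspect subset depends only on $S$ being suspect with no proper subset suspect, not on the ambient index set; thus $S$ is a minimal suspect subset of $I \setminus \{i\}$ iff it is one of $I$ with $i \notin S$.

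\noindent With these in hand I would argue correctness by strong induction on $|I|$ for suspect $I$. If no $I \setminus \{i\}$ is suspect, the loop leaves $M = \emptyset$ and the algorithm returns $\{I\}$, and by the characterisation $I$ is the unique minimal suspect subset. Otherwise $M = \bigcup_{i : I \setminus \{i\}\ \text{suspect}} \mathtt{FindSuspect}(I \setminus \{i\})$, where each recursive call satisfies the precondition and has strictly smaller domain, so the induction hypothesis identifies it with the minimal suspect subsets of $I \setminus \{i\}$. Soundness then follows from absoluteness, which turns each returned set into a minimal suspect subset of $I$. For completeness, let $S$ be any minimal suspect subset of $I$; being in this branch means some $I \setminus \{j\}$ is suspect, so by the characterisation $I$ is not itself minimal and hence $S \subsetneq I$; picking $i \in I \setminus S$, upward closure makes $I \setminus \{i\}$ suspect, the call $\mathtt{FindSuspect}(I \setminus \{i\})$ is performed, and by the induction hypothesis together with absoluteness it returns $S$. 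Therefore $M$ is exactly the set of minimal suspect subsets of $I$.

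\noindent The only genuinely delicate point is this completeness step in the inductive case: one must guarantee that \emph{every} minimal suspect subset is recovered through some branch of the loop, and this is precisely where upward closure does the work, supplying for a minimal suspect $S \subsetneq I$ an index $i \notin S$ whose removal keeps the set suspect so that the relevant recursive call is actually made. Everything else — termination (the domain strictly shrinks), soundness, and the base case — is routine once upward closure and the hypothesis $d_A(\sigma) < \cstop$ are in place.
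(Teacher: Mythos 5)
Your proof is correct and follows essentially the same route as the paper: induction on $I$, with a case split on whether some $I \setminus \{i\}$ remains suspect, and the union of the recursive calls covering all minimal suspect subsets. You make explicit two facts the paper leaves implicit (upward closure of suspectness via intensivity of $\otimes$, and the resulting local characterisation of minimality), which is a welcome tightening of the completeness step but not a different argument.
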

\begin{proof}
First, note that it is easy to see that $\mathtt{FindSuspect}$ never returns $\emptyset$.

The proof proceeds by induction on $I$. In the base, where $I = \{ i \}$, we can see that $\bigotimes \emptyset = \cstop$, thus, since $d_A(\sigma) < \cstop$, it follows that $I \setminus \{ i \} = \emptyset$ is not suspect. The first branch of the subsequent $\mathbf{if}$ is selected, which returns $\{ I \}$ itself. This matches the fact that $I$ is the only suspect subset of $I$.

In the inductive step, we assume the claim holds for all strict subsets of $I$. We consider two cases. On the one hand, if there exists an $i \in I$ such that $I \setminus \{ i \}$ is suspect, then we know that the $\mathbf{foreach}$-loop will modify $M$ (since $\mathtt{FindSuspect}$ never returns an empty set). Moreover, $I$ itself is not minimally suspect. The algorithm then returns 
\[\bigcup \{ \mathtt{FindSuspect}(I \setminus \{ i \}) : i \in I,\ I \setminus \{ i \}\ \mathrm{suspect} \}\]
By induction, $\mathtt{FindSuspect}(I \setminus \{ i \})$ returns all minimal suspect subsets of $I \setminus \{ i \}$. Since each of these is also a minimal suspect subset of $I$, and since very minimal suspect subset of $I$ that is not equal to $I$ is contained in one of these, the claim follows by the fact that we ruled out $I$ as a minimal suspect subset.
\end{proof}
In the case where $d_A(\sigma) = \cstop$, it is easy to see that $\{ \{ i \} : i \in I \}$ is the set of minimal suspect subsets of $I$.

In the worst case, every subset of $I$ is suspect, and therefore the only minimal suspect subsets are the singletons; in this scenario, there are $O(|I|!)$ calculations of a composed threshold value. Using memoization to store the minimal suspect subsets of every $J \subseteq I$, the complexity can be reduced to $O(2^{|I|})$. 

While this complexity makes the algorithm seem impractical ($I$ need not be a small set), we note that the case where all components are individually responsible for allowing a certain undesired behavior should be exceedingly rare in a system that was designed with the violated concern in mind: it would mean that \emph{every component} contains behavior that ultimately composes into the undesired behavior --- in a sense, \emph{facilitating} behavior that counteracts their interest. 

\section{Discussion}%
\label{section:conclusion}

In this paper, we proposed a framework that facilitates the construction of autonomous agents in a compositional fashion. We furthermore considered an LTL-like logic for verification of the constructed models that takes their compositional nature into account, and showed the added value of operators related to composition in verifying properties of the interface between components. We also provided a decision procedure for the proposed logic. 

The proposed agents are ``soft'', in that their actions are given preferences, which may or may not make the action feasible depending on the threshold preference. The designer can \emph{decrease} this threshold  to allow for more behavior, possibly to accommodate the preferences of another component, or \emph{increase} it to restrict undesired behavior observed at run-time or counterexamples to safety assertions found at design-time. We considered a simple method to raise the threshold enough to exclude a given behavior, but which may overapproximate in the presence of partially ordered preferences, possibly excluding desired behavior. 

In case of a composed system, one can also find out which component's thresholds can be thought of as \emph{suspect} for allowing a certain behavior. This information can give the designer a hint on how to adjust the system --- for example, if the threshold of an energy management component turns out to be suspect for the inclusion of undesired behavior, perhaps the component's threshold needs to be more conservative with regard to energy expenses to avoid the undesired behavior. We stress that responsibility may be assigned to a \emph{set} of components as a whole, if their composed threshold is suspect for allowing the undesired behavior, which is possible when preferences are partially ordered.

\section{Further Work}%
\label{section:further-work}

Throughout our investigation, the tools for verification and diagnosis were driven by the compositional nature of the framework. As a result, they apply not only to the ``grand composition'' of all components of the system, but also to subcomponents (which may themselves be composed of sub-subcomponents). What is missing from this picture is a way to ``lift'' verified properties of subcomponents to the composed system, possibly with a side condition on the interface between the subcomponent where the property holds and the subcomponent representing the rest of the system, along the lines of the interface verification in Section~\ref{section:syntax-and-semantics}.

If we assume that agents have low-latency and noiseless communication channels, one can also think of a multi-agent system as the composition of SCAs that represent each agent. As such, our methods may also apply to verification and diagnosis of multi-agent systems. However, this assumption may not hold. One way to model this could be to insert ``glue components'' that mediate the communication between agents, by introducing delay or noise. Another method would be to introduce a new form of composition for loosely coupled systems.


Finding an appropriate threshold value also deserves further attention. In particular, a method to adjust the threshold value \emph{at run-time}, would be useful, so as to allow an agent to relax its goals as gracefully as possible if its current goal appears unachievable, and raise the bar when circumstances improve.


Lastly, the use soft constraints for autonomous agents is also being researched in a parallel line of work~\cite{talcott-arbab-yadav-2015}, which employs rewriting logic. Since rewriting logic is backed by powerful tools like Maude, with support for soft constraints~\cite{wirsing-etal-2007}, we aim to reconcile the automata-based perspective with rewriting logic.

\iftechreport%

\begin{appendix}

\section{Decision Procedure}%
\label{appendix:decision-procedure}

In this appendix, we work out the details of a decision procedure for the logic proposed in Section~\ref{section:linear-temporal-logic}, i.e., a procedure to decide whether $A \models_\Sigma \phi$ holds for a given $A$ and $\phi$. This method follows~\cite{vardi-1995}, i.e., we do the following:
\begin{enumerate}
    \item Translate $A$ to a \buchi-automaton $A_M$ with the same language as $A$. 
    \item Translate $\phi$ to a \buchi-automaton $A_\phi$ that accepts the streams verified by $\phi$.
    \item Check whether language of $A_M$ is contained in that of $A_\phi$.
\end{enumerate}

The last step is an instance of checking $\omega$-regular language containment, which can be decided in $O(2^{|A_\phi|})$, where $|A_\phi|$ is the number of states of $A_\phi$~\cite{vardi-1995}. Moreover, in case of a negative answer, this method provides a $\sigma \in \Sigma^\pi$ such that $\sigma \in L(A_M)$ but $\sigma \not\in L(A_\phi)$, and therefore $\sigma \in L(A)$ but $\sigma \not\models_\Sigma \phi$.

We give the details for the first two steps step below, but first we briefly recall the details of \buchi-automata.

\subsection{\buchi-automata}

A \emph{(non-deterministic) \buchi-automaton}~\cite{buchi-1962} (BA) is a tuple $A = \angl{Q, \Sigma, \rightarrow, q^0, F}$ such that $Q$ is a finite set of \emph{states}, with $q^0 \in Q$ the \emph{initial state} and $F \subseteq Q$ the set of \emph{accepting states}, $\Sigma$ is a finite set called the \emph{alphabet} and $\rightarrow\ \subseteq Q \times \Sigma \times Q$ is a relation called the \emph{transition relation}. We write $q \myrightarrow{a} q'$ whenever $\angl{q, a, q'} \in\ \rightarrow$. 

A stream $\lambda \in Q^\omega$ is a \emph{trace} of a stream $\sigma \in \Sigma^\omega$ in $A$ if $\lambda(n) \myrightarrow{\sigma(n)} \lambda(n+1)$ holds for all $n \in \naturals$. A trace $\lambda$ is \emph{accepting} if $\lambda(n) \in F$ for infinitely many $n \in \naturals$. A stream $\sigma \in \Sigma^\omega$ is \emph{accepted} by $A$ if it has an accepting trace $\lambda$ such that $\lambda(0) = q^0$. The set of streams accepted by $A$ is the \emph{language} of $A$ and denoted by $L(A)$.

An \emph{Alternating \buchi-automaton} (ABA) is a tuple $A = \angl{Q, \Sigma, \rightarrow, q^0, F}$ such that $Q$ is a finite set of \emph{states} with $q^0 \in Q$ the \emph{initial state} and $F \subseteq Q$ the set of \emph{accepting states}, $\Sigma$ is a finite set called the \emph{alphabet} and $\rightarrow \ \subseteq Q \times \Sigma \times 2^Q$ is a (finite) relation called the transition relation. Unlike a BA, a single transition in an ABA can have multiple destinations. We write $q \myrightarrow{a} P$ when $\angl{q, a, P} \in\ \rightarrow$. A \emph{run} of a stream $\sigma \in \Sigma^\omega$ in $A$ is a labeled tree $T$ such that the root of $T$ is labeled with $q^0$, and when $q$ is the label of a node of $T$ at depth $n$ and the set of labels of children of said node is $P$, $q \myrightarrow{\sigma(n)} P$ is a transition of $A$. A run $T$ is \emph{accepting} if every infinite branch of $T$ is labeled by an accepting state infinitely often.

ABAs accept the same languages as their non-deterministic cousins~\cite{miyano-hayashi-1984}: given an ABA $A$, we can construct a BA $A'$ such that $L(A) = L(A')$.

\subsection{SCAs to a BAs}

The translation of an SCA to a BA is relatively straightforward.

\begin{lemma}%
\label{lemma:sca-to-buchi}
Let $A$ be an SCA\@. We can construct a BA $A'$ such that $L(A) = L(A')$.
\end{lemma}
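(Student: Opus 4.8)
The plan is to build $A'$ by forgetting the preference component of the SCA's transitions, retaining only those transitions whose preference clears the threshold, and declaring every state accepting. Concretely, given $A = \angl{Q, \Sigma, \abscsemiring, \rightarrow, q^0, t}$, I would set $A' = \angl{Q, \Sigma, \rightarrow', q^0, Q}$, where $\Sigma$ is a legitimate finite alphabet because it is the action set of a CAS, and the transition relation $\rightarrow'$ is defined by declaring $q \myrightarrow{a} q'$ to be a transition of $A'$ if and only if there exists $e \in \abscsemiring$ with $t \leq_\abscsemiring e$ and $q \myrightarrow{a,\ e} q'$ in $A$. Since $\rightarrow$ is a finite relation, $\rightarrow'$ is finite as well, so $A'$ is a well-defined BA. The crucial design decision is taking $F = Q$: an SCA imposes no acceptance condition beyond running forever above threshold, so every infinite run should count as a behavior, and this is exactly what $F = Q$ guarantees, since any infinite trace then visits $F$ at every step and hence infinitely often.

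Next I would verify the two language inclusions against Definition~\ref{definition:language}. For $L(A) \subseteq L(A')$: given $\sigma \in L(A)$ with witnessing streams $\mu \in Q^\omega$ and $\nu \in \abscsemiring^\omega$, each step $\mu(n) \myrightarrow{\sigma(n),\ \nu(n)} \mu(n+1)$ satisfies $t \leq_\abscsemiring \nu(n)$, so by construction it induces a transition $\mu(n) \myrightarrow{\sigma(n)} \mu(n+1)$ in $A'$; thus $\mu$ is a trace of $\sigma$ in $A'$ with $\mu(0) = q^0$, and because $F = Q$ it is accepting, giving $\sigma \in L(A')$. For $L(A') \subseteq L(A)$: given $\sigma \in L(A')$ with accepting trace $\lambda$ starting at $q^0$, every edge $\lambda(n) \myrightarrow{\sigma(n)} \lambda(n+1)$ of $A'$ arose from at least one preference value above threshold. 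Because $\rightarrow'$ is finite, I can fix once and for all a choice $e_{q,a,q'} \in \abscsemiring$ with $t \leq_\abscsemiring e_{q,a,q'}$ and $q \myrightarrow{a,\ e_{q,a,q'}} q'$ in $A$ for each edge of $\rightarrow'$, then set $\nu(n) = e_{\lambda(n),\sigma(n),\lambda(n+1)}$ and $\mu = \lambda$. The pair $(\mu,\nu)$ then witnesses $\sigma \in L(A)$.

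I do not expect a genuine obstacle, as the construction is essentially a forgetful projection; the two points deserving care are mild. First, the reverse inclusion superficially looks like it needs choice over the possibly infinite carrier $\abscsemiring$, but the finiteness of $\rightarrow$ bounds the preference values actually in play, so only a finite choice is required. Second, one must confirm that the \emph{absence} of any fairness or final-state requirement in the SCA semantics is faithfully modelled, which is precisely the role of $F = Q$. A minor loose end is the treatment of states lacking outgoing transitions: under the given semantics these cannot appear on an infinite run in either $A$ or $A'$, so they need no special handling, though one could instead invoke the $\halt$-padding convention noted after Definition~\ref{definition:language} to keep every state live.
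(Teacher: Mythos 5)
Your construction is exactly the one in the paper: keep the states, retain only transitions whose preference is above the threshold, and make every state accepting, with the two inclusions handled by discarding or reconstructing the preference stream. The proposal is correct and matches the paper's proof, just spelled out in more detail.
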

\begin{proof}
Choose $A' = \angl{Q, \Sigma, \rightarrow_t, q^0, Q}$, where $\rightarrow_t$ is the relation in which $q \myrightarrow{a}_t q'$ if and only if $q \myrightarrow{a,\, e} q'$ and $t \leq e$. We can now use the witness streams for $\sigma \in L(A)$ to show that $\sigma \in L(A')$ and vice versa. Indeed, for the inclusion from right to left we can infer the existence of a stream $\nu \in \abscsemiring^\omega$, while for the inclusion from left to right we simply discard the stream of preferences.
\end{proof}

\subsection{Formulas to BAs}

We present two methods to translate a formula into a BA that accepts precisely the streams validated by the formula. The first method is an extension of the recursive translation by Sherman et al.~\cite{sherman-pnueli-harel-1984}. We also propose an extension to the approach of Muller et al.~\cite{muller-saoudi-schupp-1988}, which has a different complexity bound.

\subsubsection{Recursive method}
One easily constructs BAs that represent atomic formulas $\top$ and $\sigma$. Moreover, we can recreate the effect of logical connectives using BAs; for example, we can construct $A_{\captures}$ such that $\tau \in L(A_{\captures})$ if and only if there exists a $\sigma \in L(A)$ with $\sigma \sqsubseteq^\omega \tau$ easily: simply choose $A_{\captures} = \angl{Q, \Sigma, \rightarrow_{\captures}, q^0, F}$ where $q \myrightarrow{b,\ e}_{\captures} q'$ if and only if $q \myrightarrow{a,\ e} q'$ with $a \sqsubseteq b$. Similar constructions exist for the other connectives, including $\composable$. This is formalized in the following lemma.

\begin{lemma}%
\label{lemma:connectives-on-automata}
Let $A_1$ and $A_2$ be BAs over alphabet $\Sigma$ and let $a \in \Sigma$. One can construct BAs $A_a$, $A_\wedge$, $A_{\until}$, $A_{\nxt}$, $A_\neg$, $A_{\captures}$ and $A_{\composable}$ such that the following are true
\begin{enumerate}[(i)]
    \item\label{claim:ltl-to-buchi-atomic} $\sigma \in L(A_a)$ if and only if $\sigma(0) = a$
    \item\label{claim:ltl-to-buchi-conjunction} $\sigma \in L(A_\wedge)$ if and only if $\sigma \in L(A_1)$ and $\sigma \in L(A_2)$~\cite{choueka-1974}
    \item\label{claim:ltl-to-buchi-until} $\sigma \in L(A_{\until})$ if and only if there exists an $n \in \naturals$ such that for all $k < n$, $\sigma^{(k)} \in L(A_1)$ and $\sigma^{(n)} \in L(A_2)$
    \item\label{claim:ltl-to-buchi-next} $\sigma \in L(A_{\nxt})$ if and only if $\sigma' \in L(A_1)$
    \item\label{claim:ltl-to-buchi-negation} $\sigma \in L(A_\neg)$ if and only if $\sigma \not\in L(A_1)$~\cite{buchi-1962}
    \item\label{claim:ltl-to-buchi-captures} $\sigma \in L(A_{\captures})$ if and only if $\tau \sqsubseteq^\omega \sigma$ for some $\tau \in L(A_1)$
    \item\label{claim:ltl-to-buchi-composable} $\sigma \in L(A_{\composable})$ if and only if $\sigma \composable^\omega \tau$ for some $\tau \in L(A_1)$
\end{enumerate}
\end{lemma}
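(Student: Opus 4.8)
The plan is to handle the connectives one at a time, each time turning the given automata into a Büchi automaton with the prescribed language. Several cases are standard: the atomic automaton $A_a$ is a two-state gadget that reads $a$ and then loops on every letter, the next automaton $A_{\nxt}$ prepends a fresh initial state to $A_1$ that moves into $q^0_1$ on any letter, conjunction $A_\wedge$ is the usual flagged product intersection~\cite{choueka-1974}, and negation $A_\neg$ is $\omega$-regular complementation~\cite{buchi-1962}. The genuinely new content is $A_{\captures}$ and $A_{\composable}$, which --- perhaps surprisingly --- are the easiest to build, being mere relabelings. The one case demanding real machinery is $A_{\until}$, whose semantics insists that \emph{every} suffix $\sigma^{(k)}$ for $k < n$ belong to $L(A_1)$, each certified by its own accepting run; I would treat this with alternating \buchi-automata.

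For $A_{\captures}$ I would keep the states, initial state and accepting states of $A_1$ and install a transition $q \myrightarrow{b}_{\captures} q'$ exactly when $A_1$ has a transition $q \myrightarrow{a} q'$ with $a \sqsubseteq b$. Correctness is a pair of run translations. If $\lambda$ is an accepting run of $\sigma$ in $A_{\captures}$, then each step $\lambda(n) \myrightarrow{\sigma(n)}_{\captures} \lambda(n+1)$ arises from some $a_n \sqsubseteq \sigma(n)$ with $\lambda(n) \myrightarrow{a_n} \lambda(n+1)$ in $A_1$; setting $\tau(n) = a_n$ makes $\lambda$ an accepting run of $\tau$ in $A_1$ (the accepting states are unchanged), so $\tau \in L(A_1)$ and $\tau \sqsubseteq^\omega \sigma$. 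Conversely, an accepting run of a witness $\tau \sqsubseteq^\omega \sigma$ in $A_1$ is, by construction, an accepting run of $\sigma$ in $A_{\captures}$. The automaton $A_{\composable}$ is built identically, replacing the side condition $a \sqsubseteq b$ by $a \composable b$ (using that $\composable$ is symmetric); the correctness argument is verbatim the same with $\composable^\omega$ in place of $\sqsubseteq^\omega$.

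The main obstacle is $A_{\until}$. I would first build an alternating \buchi-automaton and then invoke its equivalence with nondeterministic \buchi-automata~\cite{miyano-hayashi-1984}. The alternating automaton has a fresh control state $s$ as initial state together with disjoint copies of the states of $A_1$ and $A_2$; its accepting states are $F_1 \cup F_2$, with $s$ non-accepting. Reading a letter $a$ in state $s$ it chooses between two moves: it may \emph{fulfil}, spawning a run of $A_2$ on the current suffix (a disjunction over the $A_2$-transitions from its initial state on $a$), or it may \emph{defer}, simultaneously spawning a fresh run of $A_1$ on the current suffix (a disjunction over the $A_1$-transitions on $a$) \emph{and} returning to $s$ to carry the obligation to the next position; states inherited from $A_1$ and $A_2$ behave exactly as there. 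An accepting run thus fixes the first position $n$ at which it fulfils, launches independent accepting $A_1$-runs at positions $0,\dots,n-1$ and an accepting $A_2$-run at position $n$, which is precisely the condition defining $L(A_{\until})$; since $s \notin F_1 \cup F_2$, a branch that defers forever is rejected, correctly forcing $A_2$ to be entered at some finite $n$. I expect the bookkeeping here --- phrasing the alternating transition function and checking that accepting runs match the quantifier structure ``$\exists n\, \forall k < n$'' --- to be where the care is needed; the remaining connectives are routine once these gadgets are in place.
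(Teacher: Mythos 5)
Your proposal is correct and follows essentially the same route as the paper: the same two-state gadget for atoms, the standard product and complementation citations, the prepended-state construction for $\nxt$, the relabeling constructions for $\captures$ and $\composable$ (with the same run-translation correctness argument), and for $\until$ the same alternating \buchi-automaton with a non-accepting control state offering a ``defer'' move (spawn an $A_1$-run and persist) and a ``fulfil'' move (spawn an $A_2$-run), followed by the Miyano--Hayashi translation. The paper's pivot/branch/stop-node analysis of accepting runs is just a more explicit rendering of your observation that non-acceptance of the control state forces fulfilment at some finite position.
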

\begin{proof}
We treat the claims one by one.
\begin{enumerate}[(i)]
    \item The BA $A = \angl{\{ q^0, q^1 \}, \Sigma, \rightarrow, q^0, \{ q^1 \}}$, where $\rightarrow$ is smallest relation such that $q^0 \myrightarrow{a} q^1$ and $q^1 \myrightarrow{b} q^1$ for $b \in \Sigma$, suffices.
    \item Refer to~\cite[Proposition 6]{vardi-1995} for a proof.
    \item Let $A_i = \angl{Q_i, \Sigma, \rightarrow_i, q^0_i, F_i}$ for $i \in \{ 1, 2 \}$ and assume (without loss of generality) that $Q_1$ and $Q_2$ are disjoint. We choose $Q = Q_1 \cup Q_2 \cup \{ q^0 \}$ and $F = F_1 \cup F_2$, and let $\rightarrow$ be the smallest relation in $Q \times \Sigma \times 2^Q$ satisfying the rules
    \[
    \inferrule{%
        i \in \{ 1, 2 \} \\
        q \myrightarrow{a}_i q'
    }{%
        q \myrightarrow{a} \{ q' \}
    }
    \quad
    \inferrule{%
        q^0_1 \myrightarrow{a}_1 q'
    }{%
        q^0 \myrightarrow{a} \{ q', q^0 \}
    }
    \quad
    \inferrule{%
        q^0_2 \myrightarrow{a}_2 q''
    }{%
        q^0 \myrightarrow{a} \{ q'' \}
    }
    \]
    We choose $A_{\until} = \angl{Q, \Sigma, \rightarrow, q^0, F}$. It remains to show that $A_{\until}$ validates the claim. If $\sigma \in L(A_{\until})$, then there is an accepting run $T$ of $\sigma$ in $A_{\until}$. Let us refer to the nodes of $T$ labeled with $q^0$ as \emph{pivot nodes}. For every $n \in \naturals$, there is at most one pivot node at depth $n$, since every pivot node has a pivot node as its parent, and at most one pivot node among its children. Furthermore, there are two types of pivot nodes:
    \begin{itemize}
        \item nodes with children labeled by $q^0$ and some $q' \in Q_1$, called \emph{branch nodes}
        \item nodes with children labeled by some $q'' \in Q_2$, called \emph{stop nodes}
    \end{itemize}
    All children of stop nodes must be labeled with states in $Q_2$; as a consequence, no stop node has a pivot node in its descendants. This shows that there is at most one stop node in $T$. Furthermore, if there were no stop nodes in $T$, then every pivot node would have a branch node among its children, rendering an infinite run of pivot nodes, which contradicts that $T$ is an accepting run. We can thus derive that $T$ has exactly one stop node at some depth $n$, and that all other pivot nodes in $T$ occur as branch node parents of this stop node. One can then show that if $k < n$, we can construct an accepting run for $\sigma^{(k)}$ in $A_1$ from the subtree of $T$ rooted at the unique branch node with depth $k$, and that the tree rooted at the stop node gives us an accepting run of $\sigma^{(n)}$ in $A_2$.

    For the other direction, one can combine the accepting runs of $\sigma^{(k)}$ in $A_1$ and $\sigma^{(n)}$ in $A_2$ into an accepting run of $\sigma$ in $A_{\until}$ easily, by starting with a finite tree consisting of $n$ pivot nodes, and attaching the runs.

    \item Simply add a new state $q$ to $A_1$ and make this the initial state, then add a transition from $q$ to the initial state of $A_1$ labeled with $a$ for every $a \in \Sigma$. It follows that $\sigma \in L(A_X)$ if and only if $\sigma' \in L(A_1)$.
    \item Refer to~\cite{sistla-vardi-wolper-1985} or~\cite{safra-1988} for a proof.
    \item Let $A_1 = \angl{Q_1, \Sigma, \rightarrow_1, q^0_1, F_1}$. Choose $A_{\captures} = \angl{Q_1, \Sigma, \rightarrow_{\captures}, q^0_1, F_1}$, where $q \myrightarrow{b}_{\captures} q'$ if and only if there exists an $a \in \Sigma$ such that $q \myrightarrow{a}_1 q'$ and $a \sqsubseteq b$. It is easily shown that $A_{\captures}$ validates the claim.
    \item By a construction analogous to the above. \qedhere%
\end{enumerate}
\end{proof}

We thus obtain a recursive formula-to-automaton translation; for example, if $\phi = \neg \psi$, we construct the automaton $A_\psi$ representing $\psi$, from which we obtain the automaton $A_\phi$ representing $\phi$ by the aforementioned construction.
\begin{corollary}%
\label{corollary:recursive-construction}
Let $\phi \in \mathcal{L}_\Sigma$. Then one can construct an automaton $A_\phi$ such that $\sigma \models_\Sigma \phi$ if and only if $\sigma \in L(A_\phi)$.
\end{corollary}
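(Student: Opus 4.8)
The plan is to proceed by structural induction on the formula $\phi$, letting Lemma~\ref{lemma:connectives-on-automata} do all of the substantive work. For each syntactic form, the induction hypothesis supplies Büchi automata for the immediate subformulas; the corresponding claim of the lemma then converts these into an automaton whose language matches the intended semantics, and the only remaining task at each step is to line up the inference rule defining $\models_\Sigma$ with the language characterization provided by the lemma.

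There are two base cases. For $\phi = \top$, I would take the single-state automaton whose initial state is accepting and carries a self-loop labeled by every $a \in \Sigma$; this accepts all of $\Sigma^\omega$, matching the rule that every $\sigma \in \Sigma^\omega$ satisfies $\true$. (Note that the lemma itself does not list $\true$ among its cases, so this automaton must be supplied separately, though its construction is immediate.) For $\phi = a$ with $a \in \Sigma$, I would take $A_a$ from claim~(\ref{claim:ltl-to-buchi-atomic}), so that $\sigma \in L(A_a)$ iff $\sigma(0) = a$, which is exactly the condition $\sigma \models_\Sigma a$.

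For the inductive step, assume the result for all proper subformulas. If $\phi = \psi_1 \wedge \psi_2$, I apply claim~(\ref{claim:ltl-to-buchi-conjunction}) to $A_{\psi_1}$ and $A_{\psi_2}$; membership in $L(A_\wedge)$ unfolds to $\sigma \in L(A_{\psi_1})$ and $\sigma \in L(A_{\psi_2})$, hence by the induction hypothesis to $\sigma \models_\Sigma \psi_1$ and $\sigma \models_\Sigma \psi_2$, i.e., $\sigma \models_\Sigma \phi$. The cases $\phi = \psi_1 \until \psi_2$, $\phi = \nxt \psi$, $\phi = \neg \psi$, $\phi = \captures \psi$, and $\phi = \composable \psi$ are discharged identically, invoking claims~(\ref{claim:ltl-to-buchi-until}), (\ref{claim:ltl-to-buchi-next}), (\ref{claim:ltl-to-buchi-negation}), (\ref{claim:ltl-to-buchi-captures}) and~(\ref{claim:ltl-to-buchi-composable}) respectively. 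In each instance the language characterization of the constructed automaton is a verbatim transcription of the inference rule for that connective: for example, $\sigma \in L(A_\nxt)$ iff $\sigma^{(1)} \in L(A_\psi)$ mirrors the rule for $\nxt \psi$, and $\sigma \in L(A_{\captures})$ iff $\tau \sqsubseteq^\omega \sigma$ for some $\tau \in L(A_\psi)$ mirrors the rule for $\captures \psi$, so the desired equivalence follows immediately from the induction hypothesis.

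Since all the genuine constructions are already carried out in Lemma~\ref{lemma:connectives-on-automata}, there is no real obstacle at this level; the induction is essentially bookkeeping. The only points worth flagging explicitly are the separate treatment of the $\true$ base case noted above, and the need to identify $\sigma' = \sigma^{(1)}$ and to match the bound-variable names ($\sigma$ versus $\tau$) appearing in the lemma's statements with those in the semantic rules for $\captures$ and $\composable$, so that the verbatim matching in the inductive step is unambiguous.
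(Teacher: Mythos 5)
Your proposal is correct and follows essentially the same route as the paper: the corollary is obtained by the recursive (structural-induction) formula-to-automaton translation that Lemma~\ref{lemma:connectives-on-automata} supplies, with each connective handled by the corresponding claim of that lemma. Your explicit treatment of the $\true$ base case (which the lemma omits) is a reasonable and harmless addition.
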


\paragraph{Complexity}
The construction in Corollary~\ref{corollary:recursive-construction} sees a sharp rise in the number of states at each recursion. For instance, in the construction of $A_\wedge$ referenced above, if $A_1$ and $A_2$ have $n$ and $m$ states respectively, then $A_\wedge$ has $2nm$ states~\cite{choueka-1974}. The situation is worse for negation; here, we see a necessarily exponential rise in the number of states~\cite{safra-1988}. Thus the recursive translation ends up with the prohibitively unfeasible upper bound of a stack of exponentials, which is as high as the nesting depth of negations in the formula.

\subsubsection{Subformula construction}

Another approach~\cite{muller-saoudi-schupp-1988} translates $\phi$ to a language-equivalent ABA, where each (possibly negated) subformula is a state. Intuitively, a state representing subformula $\psi$ can be seen as a requirement that the remainder of the satisfies $\psi$. This approach yields an automaton of size linear in the size of the formula, but the translation from ABA to BA is necessarily exponential~\cite{boker-kupferman-rosenberg-2010}.

To use this method for our logic, we need to incorporate the connectives $\captures$ and $\composable$, i.e., we need to interlink a state representing a subformula of the form $\captures \psi$ to other states, such that the streams accepted starting in that state are the streams that validate $\captures \psi$. The na\"{\i}ve way to do this is to extend the state space of our output automaton to include subformulas of the input formula $\phi$ under the $\captures$ connective. For instance, a formula of the form $\phi = \captures (\psi_1 \wedge \psi_2)$ would give states for $\phi$, $\psi_1 \wedge \psi_2$, $\psi_1$, $\psi_2$, $\captures \psi_1$ and $\captures \psi_2$. If $a \sqsubseteq b$ and the state $\psi$ has a transition of the form $\psi \myrightarrow{a} P$, then we add a transition $\captures \psi \myrightarrow{b} \{ \captures \rho : \rho \in P \}$.

Unfortunately, this approach is unsound in general. For example, consider a CAS $\Sigma$ with distinct actions $a$, $b$ and $c$ such that $a, b \sqsubseteq c$, and set $\phi = \captures \nxt (a \wedge b)$. Note that there exists no $\sigma \in \Sigma^\omega$ with $\sigma \models_\Sigma \phi$. We now try to translate $\phi$ to a semantically equivalent ABA using the construction above. By the subformula construction, we find a BA $A_{\nxt (a \wedge b)}$ representing $\nxt (a \wedge b)$, with transitions $\phi \myrightarrow{x} \{ a,  b \}$ for $x \in \Sigma$, as well as $a \myrightarrow{a} \emptyset$ and $b \myrightarrow{b} \emptyset$, as a subautomaton of $A_\phi$. For the remainder of $A_\phi$, we have states $\phi$, $\captures a$ and $\captures b$ with transitions $\phi \myrightarrow{x} \{ \captures a, \captures b \}$ for $x \in \Sigma$, and $\captures a \myrightarrow{a} \emptyset$, $\captures a \myrightarrow{c} \emptyset$ as well as $\captures b \myrightarrow{b} \emptyset$ and $\captures b \myrightarrow{c} \emptyset$ (since $a \sqsubseteq a, c$ and $b \sqsubseteq b, c$). We can now construct a tree with a root labeled by $\phi$, and two children labeled by $\captures a$ and $\captures b$ respectively, as a run showing that $\angl{c}^\omega \in L(A_\phi)$. A similar pathological case exists for the operator $\composable$.

The essence of the problem above is in the use of ABA, where lifting the construction for $\captures$ in BAs is unsound. Specifically, we have an accepting run for the behavior $\angl{c}^\omega$ starting at $\captures \nxt (a \wedge b)$ in the form of a tree $T$, but this run does not give rise to an accepting run $T'$ starting at $\nxt (a \wedge b)$. In general, if $\captures \phi \myrightarrow{z} \captures \phi'$ and $\captures \phi \myrightarrow{z} \captures \phi''$ then the construction only guarantees that there exist $x, y$ such that $\phi \myrightarrow{x} \phi'$ and $\phi \myrightarrow{y} \phi''$ with $z \sqsubseteq x, y$, while $x$ and $y$ may differ.

We briefly sketch a method that gets around this problem. Instead of the above, we can use the subformula construction as follows. Given $\phi$, find all subformulas of the form $\captures \psi$ or $\composable \chi$ which do not appear below $\captures$ or $\composable$. Recursively construct the ABAs representing $A_\psi$ and $A_\chi$ for each of these, and convert them to equivalent BAs $A_\psi'$ and $A_\chi'$, before applying the (sound) conversion to BAs $A_{\captures \psi}$ and $A_{\composable \chi}$ representing $\captures \psi$ and $\composable \chi$ respectively. Now apply the subformula construction to $\phi$, except that the states representing $\captures \psi$ and $\composable \chi$ are replaced with the states of $A_{\captures \psi}$ and $A_{\composable \chi}$. The resulting automaton $A_\phi$ represents $\phi$; this can be shown by proving that if $q$ is a state of $A_\phi$ representing a subformula $\rho$ which is not below a $\captures$ or $\composable$, then the streams accepted at $q$ are precisely the streams that validate $\rho$; one can do this by induction on the structure of $\phi$, with atomic formulas $\top$ and $a$ for $a \in \Sigma$ as well as formulas of the form $\captures \psi$ and $\composable \chi$ as the base.

\paragraph{Complexity}
Due to the intermittent conversion of ABA to BA in the method outlined above, we can surmise that the complexity is bound from above by a stack of exponentials as high as the nesting depth of $\composable$ and $\captures$, plus one for the final translation of ABA to BA\@.

\section{Caveat regarding diagnostic preference}%
\label{appendix:caveat}

In this appendix, we show that applying the method that arises from Lemma~\ref{lemma:diagnostic-preference-bound} does not always give the lowest threshold that excludes a given behavior.

First, we fix the c-semiring $\abscsemiring$ as $\wcsemiring \times \wcsemiring$, that is: the carrier is ${(\mathbb{R} \cup \{ \infty \})}^2$, $\oplus_\abscsemiring$ is the pairwise minimum and $\otimes_\abscsemiring$ is the pairwise (affinely extended) sum, and furthermore $\angl{\infty, \infty}$ and $\angl{0, 0}$ are the minimal, respectively maximal elements. As a result, $\leq_\abscsemiring$ is the product order (i.e., $\angl{e_1, e_2} \leq_\abscsemiring \angl{e_1', e_2'}$ if and only if $e_1 \geq e_1'$ and $e_2 \geq e_2'$), and $\wedge_\abscsemiring$ is the pairwise maximum.

Furthermore, let $A$ be the SCA depicted below, with CAS $\Sigma = \{ a \}$.\footnote{The precise choice of $\composable$ and $\compose$ does not matter.}
\begin{figure}
    \centering
    \begin{tikzpicture}
        \begin{scope}[every node/.style={draw,circle},minimum size=10mm]
            \node (e0) at (0,0) {$q_0$};
            \node (e1) at (3,0) {$q_1$};
            \node (e2) at (-3,0) {$q_2$};
        \end{scope}

        \node[above=5mm of e0] (s) {};
        \path[->] (s) edge (e0);

        \path[->] (e0) edge node[above] {$a, \angl{4, 2}$} (e1);
        \path[->] (e0) edge node[above] {$a, \angl{2, 4}$} (e2);
        \path[->] (e1) edge[loop right] node[right] {$a, \angl{0, 0}$} (e1);
        \path[->] (e2) edge[loop left] node[left] {$a, \angl{0, 0}$} (e2);
    \end{tikzpicture}
\end{figure}

Suppose we want to choose $t$ such that $\sigma = \angl{a}^\omega$ is not in $L(A)$. We calculate:
\[
d_A(\sigma) = (\angl{2, 4} \oplus_\abscsemiring \angl{4, 2}) \wedge_\abscsemiring (\angl{0, 0} \oplus_\abscsemiring \angl{0, 0}) = \angl{2, 2} \wedge_\abscsemiring \angl{0, 0} = \angl{2, 2}
\]

By Lemma~\ref{lemma:diagnostic-preference-bound}, we know that if $\sigma \in L(A)$, then $t \leq_\abscsemiring \angl{2, 2}$. We can thus choose $t$ such that $t \not\leq_\abscsemiring \angl{2, 2}$ in order to exclude $\sigma$; for example, $t = \angl{1,1}$ would do. However, we can also choose $t = \angl{3, 3} \leq_\abscsemiring \angl{2, 2}$. In this case, we find that $\sigma \not\in L(A)$ as well. In conclusion, the application of Lemma~\ref{lemma:diagnostic-preference-bound} did not give the lowest threshold that excluded the given behavior.

\end{appendix}

\fi

\bibliographystyle{splncs03}
\bibliography{bibliography}

\end{document}